\newtheorem{RK}{Remark}
\newtheorem{define}{Definition}
\newtheorem{lem}{Lemma}
\newtheorem{thm}{Theorem}
\newtheorem{cor}{Corollary}
\newcommand{\nn}{\nonumber}
\newcommand{\mc}{\mathcal}
\newcommand{\ms}{\mathsf}
\newcommand{\mr}{\mathrm}
\newcommand{\be}{\begin{equation}}
\newcommand{\ee}{\end{equation}}
\newcommand{\bc}{\begin{center}}
\newcommand{\ec}{\end{center}}
\newcommand{\bfl}{\begin{flushleft}}
\newcommand{\efl}{\end{flushleft}}
\newcommand{\beqa}{\begin{eqnarray}}
\newcommand{\eeqa}{\end{eqnarray}}
\newcommand{\beqan}{\begin{eqnarray*}}
\newcommand{\eeqan}{\end{eqnarray*}}
\newcommand{\beq}{\begin{equation}}
\newcommand{\eeq}{\end{equation}}
\newcommand{\lp}{\left (}
\newcommand{\rp}{\right )}
\begin{document}
%
\title{{On the Sum-Capacity with Successive Decoding in Interference Channels}}

\author{~\\
\IEEEauthorblockN{Yue Zhao, Chee Wei Tan, A. Salman Avestimehr, Suhas N. Diggavi, Gregory J. Pottie}
}

\maketitle


\begin{abstract}
In this paper, we investigate the sum-capacity of the two-user Gaussian interference channel with Gaussian superposition coding and successive decoding. We first examine an approximate deterministic formulation of the problem, and introduce the complementarity conditions that capture the use of Gaussian coding and successive decoding. In the deterministic channel problem, we find the constrained sum-capacity and its achievable schemes with the minimum number of messages, first in symmetric channels, and then in general asymmetric channels. We show that the constrained sum-capacity \emph{oscillates} as a function of the cross link gain parameters between the information theoretic sum-capacity and the sum-capacity with interference treated as noise. Furthermore, we show that if the number of messages of either of the two users is fewer than the minimum number required to achieve the constrained sum-capacity, the maximum achievable sum-rate drops to that with interference treated as noise. We provide two algorithms (a simple one and a finer one) to translate the optimal schemes in the deterministic channel model to the Gaussian channel model. We also derive two upper bounds on the sum-capacity of the Gaussian Han-Kobayashi schemes, which automatically upper bound the sum-capacity using successive decoding of Gaussian codewords. Numerical evaluations show that, similar to the deterministic channel results, the constrained sum-capacity in the Gaussian channels oscillates between the sum-capacity with Han-Kobayashi schemes and that with single message schemes.
\end{abstract}

\section{Introduction}
We consider the sum-rate maximization problem in two-user Gaussian interference channels (cf. Figure \ref{twoIC}) under the constraints of successive decoding. While the information theoretic capacity region of the Gaussian interference channel is still not known, it has been shown that a Han-Kobayashi scheme with random Gaussian codewords can achieve within 1 bit/s/Hz of the capacity region \cite{ETW08}, and hence within 2 bits/s/Hz of the sum-capacity. In this Gaussian Han-Kobayashi scheme, each user first decodes both users' common messages jointly, and then decodes its own private message. In comparison, the simplest commonly studied decoding constraint is that each user treats the interference from the other users as noise, i.e., without any decoding attempt. Using Gaussian codewords, the corresponding constrained sum-capacity problem can be formulated as a non-convex optimization of power allocation, which has an analytical solution in the two-user case \cite{two06}. It has also been shown that within a certain range of channel parameters for \emph{weak} interference channels, treating interference as noise achieves the information theoretic sum-capacity \cite{VVV09, Mota09, Shang09}. For general interference channels with \emph{more than two} users, there is so far neither a near optimal solution information theoretically, nor a polynomial time algorithm that finds a near optimal solution with interference treated as noise \cite{LZ08} \cite{TFL11}.

In this paper, we consider a decoding constraint --- \emph{successive decoding of Gaussian superposition codewords} --- that bridges the complexity between joint decoding (e.g. in Han-Kobayashi schemes) and treating interference as noise. 
We investigate the constrained sum-capacity and its achievable schemes. Compared to treating interference as noise, allowing successive cancellation yields a much more complex problem structure. To clarify and capture the key aspects of the problem, we resort to the deterministic channel model \cite{ADTjour}. In \cite{BT08}, the information theoretic capacity region for the two-user deterministic interference channel is derived as a special case of the El Gamal-Costa deterministic model \cite{El82}, and is shown to be achievable using Han-Kobayashi schemes.

We transmit messages using a superposition of Gaussian codebooks, and use successive decoding. To capture the use of successive decoding of Gaussian codewords, in the deterministic formulation, we introduce the \emph{complementarity conditions} on the bit levels, which have also been characterized using a conflict graph model in \cite{Shaoinfo}. We develop transmission schemes on the bit-levels, which in the Gaussian model corresponds to message splitting and power allocation of the messages. We then solve the constrained sum-capacity, and show that it \emph{oscillates} (as a function of the cross link gain parameters) between the information theoretic sum-capacity and the sum-capacity with interference treated as noise. Furthermore, the minimum number of messages needed to achieve the constrained sum-capacity is obtained. Interestingly, we show that if the number of messages is limited to even \emph{one less} than this minimum capacity achieving number, the sum-capacity drops to that with interference treated as noise.

We then translate the optimal schemes in the deterministic channel to the Gaussian channel, using a rate constraint equalization technique. To evaluate the optimality of the translated achievable schemes, we derive and compute two upper bounds on the sum-capacity of Gaussian Han-Kobayashi schemes\footnote{Throughout this paper, when we refer to the Han-Kobayashi scheme, we mean the Gaussian Han-Kobayashi scheme, unless stated otherwise.}. 
Since a scheme using superposition coding with Gaussian codebooks and successive decoding is a special case of Han-Kobayashi schemes, these bounds automatically apply to the sum-capacity with such successive decoding schemes as well. We select two mutually exclusive subsets of the inequality constraints that characterize the Gaussian Han-Kobayashi capacity region. Maximizing the sum-rate with each of the two subsets of inequalities leads to one of the two upper bounds. The two bounds are shown to be tight in different ranges of parameters. Numerical evaluations show that the sum-capacity with Gaussian superposition coding and successive decoding oscillates between the sum-capacity with Han-Kobayashi schemes and that with single message schemes.

The remainder of the paper is organized as follows. Section \ref{probform} formulates the problem of sum-capacity with successive decoding of Gaussian superposition codewords in Gaussian interference channels, and compares it with Gaussian Han-Kobayashi schemes. Section \ref{DCsec} reformulates the problem with the deterministic channel model, and then solves the constrained sum-capacity. Section \ref{GCsec} translates the optimal schemes in the deterministic channel back to the Gaussian channel, and derives two upper bounds on the constrained sum-capacity. Numerical evaluations of the achievability against the upper bounds are provided. Section \ref{disc} concludes the paper with a short discussion on generalizations of the coding-decoding assumptions and their implications.

\section{Problem formulation in Gaussian channels} \label{probform}
We consider the two-user Gaussian interference channel shown in Figure \ref{twoIC}. The received signals of the two users are
\begin{align}
& y_1 = h_{11}x_1 + h_{21}x_2 + z_1, \nn\\
& y_2 = h_{22}x_2 + h_{12}x_1 + z_2, \nn
\end{align}
where $\{h_{ij}\}$ are constant complex channel gains, and $z_i\sim\mc{C}\mc{N}(0,N_i)$. Define $g_{ij} \triangleq |h_{ij}|^2, (i,j = 1,2)$.

\begin{figure}[tb]
  \centering
  \includegraphics[scale = 0.7]{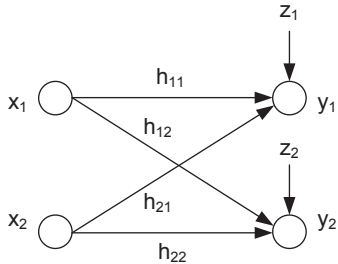}
  \caption{Two-user Gaussian interference channel.}
  \label{twoIC}
\end{figure}

There is an average power constraint equal to $\bar{p}_i$ for the $i^{th}$ user ($i = 1,2$). In the following, we first formulate the problem of finding the optimal Gaussian superposition coding and successive decoding scheme, and then provide an illustrative example to show that successive decoding schemes do not necessarily achieve the same capacity as Han-Kobayashi schemes.

\subsection{Gaussian Superposition Coding and Successive Decoding: a Power and Decoding Order Optimization}
Suppose the $i^{th}$ user uses a superposition of $L_i$ messages $x_i^{(\ell)} (1\leq \ell \leq L_i)$. Denote by $r_i^{(\ell)}$ the rate of message $x_i^{(\ell)}$. For a given block length $n$, for each message $x_i^{(\ell)}$, a codebook of size $2^{nr_i^{(\ell)}}$ is generated by using IID random variables of $\mc{C}\mc{N}(0,1)$. The codebooks for different messages are independently generated. For the $i^{th}$ user, the transmit signal $x_i$ is a superposition of $L_i$ Gaussian codewords, with its individual power constraint $\bar{p}_i$ satisfied, i.e.,
\begin{equation*}
x_i= \sum_{\ell=1}^{L_i} \sqrt{p_i^{(\ell)}} x_i^{(\ell)},
\end{equation*}
\beq \label{eq:PowerConstraints2} \sum_{\ell=1}^{L_i} p_i^{(\ell)} \leq \bar{p}_i, \quad i=1, 2.\eeq

The $i^{th}$ receiver attempts to decode all $x_i^{(\ell)}$, $\ell=1,\ldots,L_i$, using successive decoding as follows. It chooses a decoding order $\mc{O}_i$ of all the $L_1+L_2$ messages from both users. 
It starts decoding from the first message in this order (by treating all other messages that are not yet decoded as noise,) then peeling it off and moving to the next one, until it decodes all the messages intended for itself --- $x_i^{(\ell)}$, $\ell=1,\ldots,L_i$.

Denote the message that has order $q$ in $\mc{O}_i$ by $x_{t_{q,i}}^{(\ell_{q,i})}$, i.e., it is the ${\ell_{q,i}}^{th}$ message of the ${t_{q,i}}^{th}$ user. Then, the achievable rate for the successive decoding procedure to have a vanishingly small error probability as the block length $n\rightarrow\infty$ yields the following constraints on the rates of the messages:
\beq  \label{eq:DecodConstraints} r_{t_{q,i}}^{(\ell_{q,i})} \leq  \log \lp 1+ \frac{p_{t_{q,i}}^{(\ell_{q,i})} g_{t_{q,i}i}}{\sum_{s=q+1}^{L_1+L_2} p_{t_{s,i}}^{(\ell_{s,i})} g_{t_{s,i}i} + N_i} \rp , \quad \forall 1\leq q \leq \max_{1\le\ell\le L_i}\{\hbox{order of $x_i^{\ell}$ in $\mathcal{O}_i$}\},~ i=1,2.\eeq
Now, we can formulate the sum-rate maximization problem as:
\begin{align}
\max_{\substack{\{p_i^{(\ell)}\},\mc{O}_i,~\\i=1,2}} & \sum_{i=1}^2 \sum_{\ell=1}^{L_i} r_i^{(\ell)} \label{theprobG} \\
\text{subject to: } & (\ref{eq:PowerConstraints2}), (\ref{eq:DecodConstraints}). \nn
\end{align}
Note that problem \eqref{theprobG} involves both a \emph{combinatorial optimization} of the decoding orders $\{\mc{O}_i\}$ and a \emph{non-convex optimization} of the transmit power $\{p_i^{(\ell)}\}$. As a result, it is a hard problem from an optimization point of view which has not been addressed in the literature.

Interestingly, we show that an ``indirect'' approach can effectively and fruitfully provide approximately optimal solutions to the above problem \eqref{theprobG}. Instead of directly working with the Gaussian model, we approximate the problem using the recently developed deterministic channel model \cite{ADTjour}. The approximate formulation successfully captures the key structure and intuition of the original problem, for which we give a complete analytical solution that achieves the constrained sum-capacity in all channel parameters. Next, we translate this optimal solution in the deterministic formulation back to the Gaussian formulation, and show that the resulting solution is indeed close to the optimum. This indirect approach of solving \eqref{theprobG} is outlined in Figure \ref{diagram}.

\begin{figure}[tb]
  \centering
  \includegraphics[scale = 0.7]{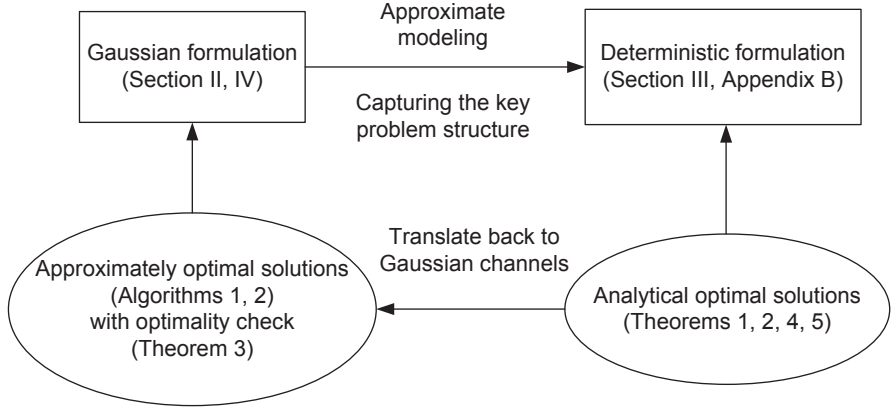}
  \caption{Our approach to solving problem \eqref{theprobG}.}
  \label{diagram}
\end{figure}


Next, we provide an illustration of the following point: Although the constraints for the achievable rate region with Han-Kobayashi schemes share some similarities with those for the capacity region of multiple access channels, successive decoding in interference channels does \emph{not} always have the same achievability as Han-Kobayashi schemes, (whereas time-sharing of successive decoding schemes does achieve the capacity region of multiple access channels.)

\subsection{Successive Decoding of Gaussian Codewords vs. Gaussian Han-Kobayashi Schemes with Joint Decoding} \label{HvSsec}
We first note that Gaussian superposition coding - successive decoding is a special case of the Han-Kobayashi scheme, using the following observations.
For the $1^{st}$ user, if its message $x_1^{(\ell)} (1\le\ell\le L_1)$ is \emph{decoded} at the $2^{nd}$ receiver according to the decoding order $\mc{O}_2$, we categorize it into the \emph{common} information of the $1^{st}$ user. Otherwise, $x_1^{(\ell)}$ is treated as noise at the $2^{nd}$ receiver, i.e., it appears \emph{after} all the messages of the $2^{nd}$ user in $\mc{O}_2$, and we categorize it into the \emph{private} information of the $1^{st}$ user. The same categorization is performed for the $L_2$ messages of the $2^{nd}$ user. Note that every message of the two users is either categorized as private information or common information. Thus, every successive decoding scheme is a special case of the Han-Kobayashi scheme, and hence the capacity region with successive decoding of Gaussian codewords is included in that with Han-Kobayashi schemes.

However, the inclusion in the other direction is untrue, since Han-Kobayashi schemes allow joint decoding. In the following sections, we will give a characterization of the difference between the maximum achievable sum-rate using Gaussian successive decoding schemes and that using Gaussian Han-Kobayashi schemes. This difference appears \emph{despite} the fact that the sum-capacity of a Gaussian multiple access channel is achievable using successive decoding of Gaussian codewords. 
In the remainder of this section, we show an illustrative example that provides some intuition into this difference.

Suppose the $i^{th}$ user ($i=1,2$) uses \emph{two} messages: a common message $x_i^c$ and a private message $x_i^p$. 
We consider a power allocation to the messages, and denote the power of $x_i^c$ and $x_i^p$ by $q_i^c$ and $q_i^p,~ (i=1,2.)$ Denote the achievable rates of $x_i^c$ and $x_i^p$ by $r_i^c$ and $r_i^p$. In a Han-Kobayashi scheme, at each receiver, the common messages and the intended private message are \emph{jointly} decoded, treating the unintended private message as noise. This gives rise to the achievable rate region with any given power allocation as follows:
\begin{align}
r_1^c + r_1^p + r_2^c \le \log(1+ \frac{q_1^c + q_1^p + g_{21}q_2^c}{g_{21}q_2^p + N_1}), ~& r_2^c + r_2^p + r_1^c \le \log(1+ \frac{q_2^c + q_2^p + g_{12}q_1^c}{g_{12}q_1^p + N_2}), \label{H1}\\
r_1^c + r_2^c \le \log(1+ \frac{q_1^c + g_{21}q_2^c}{g_{21}q_2^p + N_1}), ~& r_2^c + r_1^c \le \log(1+ \frac{q_2^c + g_{12}q_1^c}{g_{12}q_1^p + N_2}), \label{H2}\\
r_1^c + r_1^p \le \log(1+ \frac{q_1^c + q_1^p}{g_{21}q_2^p + N_1}), ~& r_2^c + r_2^p \le \log(1+ \frac{q_2^c + q_2^p}{g_{12}q_1^p + N_2}), \label{H3}\\
r_1^p + r_2^c \le \log(1+ \frac{q_1^p + g_{21}q_2^c}{g_{21}q_2^p + N_1}), ~& r_2^p + r_1^c \le \log(1+ \frac{q_2^p + g_{12}q_1^c}{g_{12}q_1^p + N_2}), \label{H4}\\
r_1^c \le \log(1+ \frac{q_1^c}{g_{21}q_2^p + N_1}), ~& r_2^c \le \log(1+ \frac{q_2^c}{g_{12}q_1^p + N_2}), \label{H5}\\
r_2^c \le \log(1+ \frac{g_{21}q_2^c}{g_{21}q_2^p + N_1}), ~& r_1^c \le \log(1+ \frac{g_{12}q_1^c}{g_{12}q_1^p + N_2}), \label{H6}\\
r_1^p \le \log(1+ \frac{q_1^p}{g_{21}q_2^p + N_1}), ~& r_2^p \le \log(1+ \frac{q_2^p}{g_{12}q_1^p + N_2}). \label{H7}
\end{align}

In a successive decoding scheme, depending on the different decoding orders applied, the achievable rate regions have different expressions. In the following, we provide and analyze the achievable rate region with the decoding orders at receiver 1 and 2 being $(x_1^c \rightarrow x_2^c \rightarrow x_1^p)$ and $(x_2^c \rightarrow x_1^c \rightarrow x_2^p)$ respectively. The intuition obtained with these decoding orders holds similarly for other decoding orders. With any given power allocation, we have
\begin{align}
r_1^c & \le \min\lp \log(1 + \frac{q_1^c}{q_1^p + g_{21}(q_2^c + q_2^p) + N_1}), \log(1+ \frac{g_{12}q_1^c}{q_2^p + g_{12}q_1^p + N_2}) \rp, \label{S1}\\
r_2^c & \le \min\lp \log(1 + \frac{q_2^c}{q_2^p + g_{12}(q_1^c + q_1^p) + N_2}), \log(1+ \frac{g_{21}q_2^c}{q_1^p + g_{21}q_2^p + N_1}) \rp, \label{S2}\\
r_1^p & \le \log(1+ \frac{q_1^p}{g_{21}q_2^p + N_1}), r_2^p \le \log(1+ \frac{q_2^p}{g_{12}q_1^p + N_2}). \label{S3}
\end{align}
It is immediate to check that \eqref{S1} $\sim$ \eqref{S3} $\Rightarrow$ \eqref{H1} $\sim$ \eqref{H7}, but not vice versa.

To observe the difference between the constrained sum-capacity with \eqref{H1} $\sim$ \eqref{H7} and that with \eqref{S1} $\sim$ \eqref{S3}, we examine the following symmetric channel,
\begin{equation} \label{firstex}
g_{11} = g_{22} = 1, g_{12} = g_{21} = 0.17, N_1 = N_2 = 1, 
\end{equation}
in which we apply symmetric power allocation schemes with $q_1^c = q_2^c$ and $q_1^p = q_2^p$, and a power constraint of $\bar{p}=\bar{p}_i = q_i^p + q_i^p = 1000, i=1,2$.
\begin{RK}
Note that $\ms{SNR} = \frac{g_{11}\bar{p}}{N_i}=1000 \sim 30dB, \ms{INR} = \frac{g_{21}\bar{p}}{N_j}=170 \sim 22.5dB \Rightarrow \frac{\log\ms{INR}}{\log\ms{SNR}} \approx \frac{3}{4}$. As indicated in Figure 19 of \cite{BT08}, under this parameter setting, simply using successive decoding of Gaussian codewords can have an arbitrarily large sum-capacity loss compared to joint decoding schemes, as $\ms{SNR}\rightarrow\infty$. 
\end{RK}

\begin{figure}[tb]
  \centering
  \includegraphics[scale = 0.5]{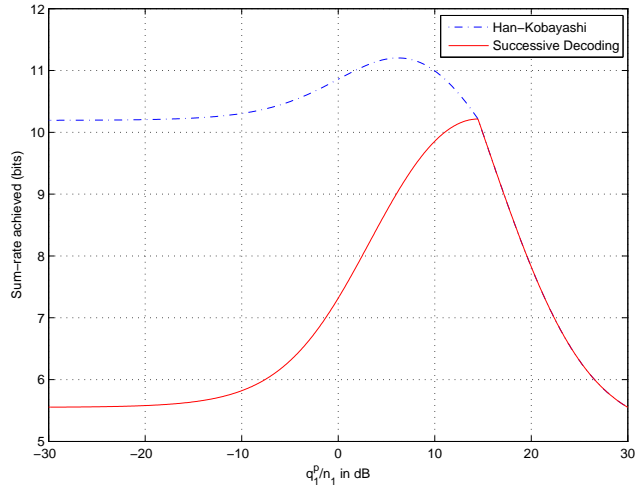}
  \caption{Illustrations of the difference between the achievable sum-rate with Han-Kobayashi schemes and that with successive decoding of Gaussian codewords.}
  \label{HKvsSC}
\end{figure}

We plot the sum-rates with the private message power $q_i^p$ sweeping from nearly zero (-30dB) to the maximum (30dB) as in Figure \ref{HKvsSC}. As observed, the difference between the two schemes is evident when the private message power $q_i^p$ is sufficiently smaller than the common message power $q_i^c$ (with $q_i^p + q_i^c = 1000$.) The intuition of why successive decoding of Gaussian codewords is not equivalent to the Han-Kobayashi schemes is best reflected in the case of $q_i^p=0$. In the above parameter setting, with $q_i^p=0$, \eqref{H1} $\sim$ \eqref{H7} translate to
\begin{align}
r_1^c + r_2^c \le \log(1+ & \frac{q_1^c + g_{21}q_2^c}{N_1}) = 10.19 ~bits, \\
r_1^c \le \log(1+ \frac{g_{12}q_1^c}{N_2}) = 7.42 & ~bits,~ r_2^c \le \log(1+ \frac{g_{21}q_2^c}{N_1}) = 7.42 ~bits,
\end{align}
whereas \eqref{S1} $\sim$ \eqref{S3} translate to
\begin{align}
r_1^c & \le \min\{ \log(1 + \frac{q_1^c}{g_{21}q_2^c + N_1}), \log(1+ \frac{g_{12}q_1^c}{N_2}) \} = \min\{2.78, 7.42\} = 2.78 ~bits,\\
r_2^c & \le \min\{ \log(1 + \frac{q_2^c}{g_{12}q_1^c + N_2}), \log(1+ \frac{g_{21}q_2^c}{N_1}) \} = \min\{2.78, 7.42\} = 2.78 ~bits.
\end{align}
As a result, the maximum achievable sum-rates with the Han-Kobayashi scheme and that with the successive decoding scheme are $10.19$ bits and $5.56$ bits respectively. Here, the key intuition is as follows: for a common message, its individual rate constraints at the two receivers in a successive decoding scheme \eqref{S1}, \eqref{S2} are tighter than those in a joint decoding scheme \eqref{H5}, \eqref{H6}. In the following sections, we will see that the constraints \eqref{S1}, \eqref{S2} lead to a non-smooth behavior of the sum-capacity using successive decoding of Gaussian codewords. Finally, we connect the results shown in Figure \ref{HKvsSC} to the results shown later in Figure \ref{15_30} of Section \ref{perf}:
\begin{RK}
In Figure \ref{HKvsSC}, the optimal symmetric power allocation for a Han-Kobayashi scheme and that for a successive decoding scheme are $q_1^p/N_1 = 6.2dB$ and $14.5dB$ respectively, leading to sum-rates of 11.2 bits and 10.2 bits. This result corresponds to the performance evaluation at $\alpha=\frac{\log(\ms{INR})}{\log(\ms{SNR})} = 0.75$ in Figure \ref{15_30}.
\end{RK}


\section{Sum-capacity in deterministic interference channels} \label{DCsec}

\subsection{Channel Model and Problem Formulation}
In this section, we apply the deterministic channel model \cite{ADTjour} as an approximation of the Gaussian model on the two-user interference channel. We define
\begin{align}
n_{11} \triangleq \log(\ms{SNR}_1) = \log(\frac{g_{11}\bar{p}_1}{N_1}) = \log(\tilde{g}_{11}\bar{p}_1),\\
n_{22} \triangleq \log(\ms{SNR}_2) = \log(\frac{g_{22}\bar{p}_2}{N_2}) = \log(\tilde{g}_{22}\bar{p}_2),\\
n_{12} \triangleq \log(\ms{INR}_1) = \log(\frac{g_{21}\bar{p}_2}{N_1}) = \log(\tilde{g}_{21}\bar{p}_2),\\
n_{21} \triangleq \log(\ms{INR}_2) = \log(\frac{g_{12}\bar{p}_1}{N_2}) = \log(\tilde{g}_{12}\bar{p}_1),
\end{align}
where $\tilde{g}_{ij}\triangleq g_{ij}/N_j$ are the channel gains normalized by the noise power. Without loss of generality (WLOG), we assume that $n_{11} \ge n_{22}$. We note that the logarithms used in this paper are taken to base 2. Now, $n_{ji}$ counts the bit levels of the signal sent from the $i^{th}$ transmitter that are above the noise level at the $j^{th}$ receiver. Further, we define
\begin{equation}
\delta_1 \triangleq n_{11} - n_{21} = -\log(\frac{\tilde{g}_{12}}{\tilde{g}_{11}}),~ \delta_2 \triangleq n_{22} - n_{12} = -\log(\frac{\tilde{g}_{21}}{\tilde{g}_{22}}), \label{deltadef}
\end{equation}
which represent the cross channel gains relative to the direct channel gains, in terms of the number of bit-level shifts.
To formulate the optimization problem, we consider $\{n_{ji}\}$ to be \emph{real} numbers. (As will be shown later in Remark \ref{realvsint}, with \emph{integer} bit-level channel parameters, our derivations automatically give integer bit-level optimal solutions.) 

\begin{figure}[tb]
  \centering
  \includegraphics[scale = 0.6]{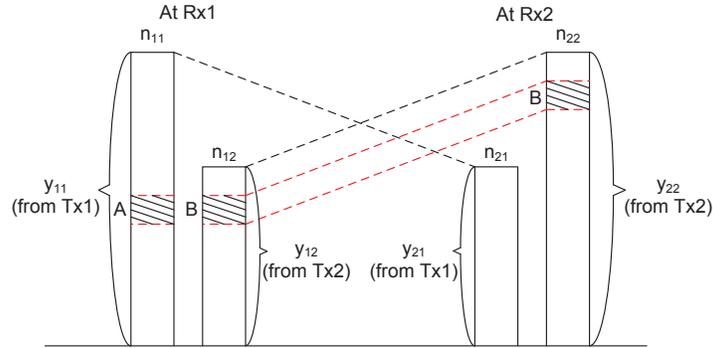}
  \caption{Two-user deterministic interference channel. Levels A and B interfere at the $1^{st}$ receiver, and cannot be fully active simultaneously.}
  \label{ICfig}
\end{figure}

In Figure \ref{ICfig}, the desired signal and the interference signal at both receivers are depicted. $y_{11}$ and $y_{12}$ are the sets of received \emph{information levels} at receiver 1 that are above the noise level, from users 1 and 2 respectively. $y_{21}$ and $y_{22}$ are the sets of received information levels at receiver 2.
A more concise representation is provided in Figure \ref{Genform}:
\begin{itemize}
\item The sets of information levels of the \emph{desired} signals at receivers 1 and 2 are represented by the continuous intervals $I_1 = [0, n_{11}]$ and $I_2 = [n_{11} - n_{22}, n_{11}]$ on two parallel lines, where the leftmost points correspond to the most significant (i.e., highest) information levels, and the points at $n_{11}$ correspond to the positions of the noise levels at both receivers.
\item The positions of the information levels of the \emph{interfering} signals are indicated by the dashed lines crossing between the two parallel lines.
\end{itemize}

Note that an information level (or simply termed \emph{``level''}) is a real \emph{point} on a line, and the measure of a set of levels (e.g. the length of an interval) equals the amount of information that this set can carry. The design variables 
are \emph{whether each level of a user's received desired signal carries information for this user}, characterized by the following definition:
\begin{define}
$f_i(x)$ is the indicator function on whether the levels inside $I_i$ carry information for the $i^{th}$ user.
\begin{equation} \label{indic}
f_i(x) = \left\{
\begin{array}{lll}
1, & \text{ if } x\in I_i, \text{ and level } x \text{ carries information for the $i^{th}$ user,}\\
0, & \text{ otherwise.}
\end{array}
\right.
~ (i = 1,2.)
\end{equation}
\end{define}
As a result, the rates of the two users are $R_1 = \int_0^{n_{11}} f_1(x) \mr{d}x, R_2 = \int_0^{n_{11}} f_2(x) \mr{d}x$. For an information level $x$ s.t. $f_i(x)=1$, we call it an \emph{active} level for the $i^{th}$ user, and otherwise an \emph{inactive} level.

The constraints from superposition of Gaussian codewords with successive decoding \eqref{S1} $\sim$ \eqref{S3} translate to the following \emph{Complementarity Conditions} in the deterministic formulation.
\begin{align}
& f_1(x)f_2(x+\delta_1) = 0, \forall -\infty < x < \infty \label{cc10},\\
& f_2(x)f_1(x+\delta_2) = 0, \forall -\infty < x < \infty \label{cc20},
\end{align}
where $\delta_1$ and $\delta_2$ are defined in \eqref{deltadef}.
The interpretation of \eqref{cc10} and \eqref{cc20} are as follows: for any two levels each from one of the two users, if they interfere with each other at any of the two receivers, they cannot be simultaneously active. For example, in Figure \ref{ICfig}, information levels $A$ from the $1^{st}$ user and $B$ from the $2^{nd}$ user interfere at the $1^{st}$ receiver, and hence cannot be fully active simultaneously. These complementarity conditions have also been characterized using a conflict graph model in \cite{Shaoinfo}.

\begin{figure}[tb]
  \centering
  \includegraphics[scale = 0.7]{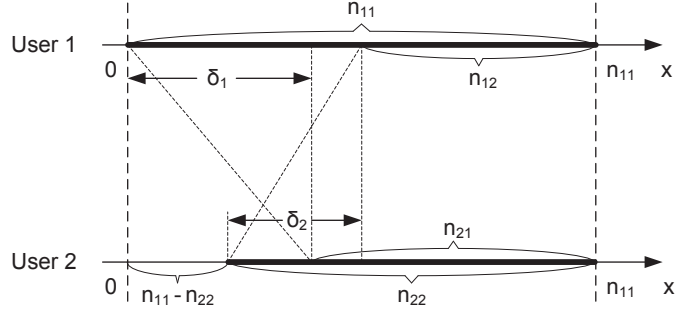}
  \caption{Interval representation of the two-user deterministic interference channel.}
  \label{Genform}
\end{figure}

\begin{RK} \label{numcodes}
For any given function $f_i(x), x\in I_i$, every disjoint segment within $I_i$ with $f_i(x) = 1$ on it corresponds to a distinct message. Adjacent segments that can be so combined as a super-segment having $f_i(x) = 1$ on it, are viewed as \emph{one} segment, i.e., the combined super-segment. Thus, for two segments $s_1=[a,b]\in I_i$ and $s_2=[c,d]\in I_i, (b<c,)$ satisfying $f_i(x) = 1, \forall x\in s_1\cup s_2$, if $\exists x_0\in(b,c), f(x_0)=0$, then $s_1,s_2$ separated by the point $x_0$ have to correspond to two distinct messages.
\end{RK}

Finally, we note that
\begin{align}
&\eqref{cc10} \Leftrightarrow f_2(x)f_1(x - \delta_1) = 0, \forall -\infty < x < \infty, \nn\\
\text{and}~~& \eqref{cc20} \Leftrightarrow f_1(x)f_2(x-\delta_2) = 0, \forall -\infty < x < \infty.\nn
\end{align}
Thus, we have the following result:
\begin{lem} \label{equivcc}
The parameter settings $\left\{
\begin{array}{l}
\delta_1 = a\\
\delta_2 = b
\end{array}
\right.$
and $\left\{
\begin{array}{l}
\delta_1 = -b\\
\delta_2 = -a
\end{array}
\right.$ correspond to the same set of complementarity conditions.
\end{lem}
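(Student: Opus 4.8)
The plan is to reduce the claim to the two displayed equivalences stated immediately before the lemma, which are obtained by the change of variable $x \mapsto x - \delta_i$ in the defining conditions \eqref{cc10} and \eqref{cc20}. First I would write out both parameter settings explicitly. Under $\{\delta_1 = a,\ \delta_2 = b\}$, the complementarity conditions read $f_1(x)f_2(x+a)=0$ and $f_2(x)f_1(x+b)=0$ for all real $x$. Under $\{\delta_1 = -b,\ \delta_2 = -a\}$, they read $f_1(x)f_2(x-b)=0$ and $f_2(x)f_1(x-a)=0$ for all real $x$.

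Next I would match these pairwise using the stated equivalences. Substituting $\delta_1 = a$ into the first equivalence gives that $f_1(x)f_2(x+a)=0\ (\forall x)$ is equivalent to $f_2(x)f_1(x-a)=0\ (\forall x)$, which is precisely the second condition of the setting $\{\delta_1=-b,\ \delta_2=-a\}$. Similarly, substituting $\delta_2 = b$ into the second equivalence gives that $f_2(x)f_1(x+b)=0\ (\forall x)$ is equivalent to $f_1(x)f_2(x-b)=0\ (\forall x)$, which is precisely the first condition of that same setting. Hence the pair of conditions under the first setting is logically equivalent, as a pair, to the pair of conditions under the second setting, so the two parameter settings impose identical constraints on every admissible $(f_1,f_2)$.

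The only point needing care is the change of variables underlying the stated equivalences: since the conditions are quantified over all real $x$, replacing $x$ by $x-\delta_1$ (respectively $x-\delta_2$) is a bijection of the real line and therefore leaves the universally quantified statement unchanged. There is no analytic obstacle here; the entire content is the bookkeeping observation that swapping the two users and negating the two shift parameters turns each ``forward'' complementarity condition into the ``backward'' form of the other. I would close by noting that this symmetry is exactly what will later permit restricting attention to a canonical choice of sign and ordering of $(\delta_1,\delta_2)$ without loss of generality.
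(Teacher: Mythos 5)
Your proposal is correct and follows essentially the same route as the paper: the lemma there is stated as an immediate consequence of the two displayed equivalences obtained by the change of variable $x \mapsto x - \delta_i$ in \eqref{cc10} and \eqref{cc20}, which is exactly the pairwise matching you carry out. Your explicit remark that the substitution is a bijection of the real line, so the universally quantified conditions are preserved, is the only detail the paper leaves implicit, and it is the right one to flag.
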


We consider the problem of maximizing the sum-rate $R^{sum} \triangleq R_1 + R_2$ of the two users employing successive decoding, formulated as the following continuous support (infinite dimensional) optimization problem:
\begin{align}
\max_{f_1(x),f_2(x)} & (R^{sum} = ) {\int_0^{n_{11}} f_1(x) + f_2(x) \mr{d}x} \label{theprob0} \\
\text{ subject to } & \eqref{indic}, \eqref{cc10}, \eqref{cc20}. \nn
\end{align}
Problem \eqref{theprob0} does not include upper bounds on the number of messages $L_1,L_2$. Such upper bounds can be added based on Remark \ref{numcodes}. We will analyze the cases without and with upper bounds on the number of messages. We first derive the constrained sum-capacity in \emph{symmetric} interference channels in the remainder of this section. Results are then generalized using similar approaches to \emph{general (asymmetric)} interference channels in Appendix \ref{asymDC}.

\subsection{Symmetric Interference Channels}
In this section, we consider the case where $n_{11} = n_{22}, n_{12} = n_{21}$. Define $\alpha \triangleq \frac{n_{12}}{n_{11}}, \beta \triangleq 1-\alpha$. WLOG, we normalize the amount of information levels by $n_{11}$, and consider $n_{11} = n_{22} = 1$, and $n_{12} = n_{21} = \alpha$. 
Note that in symmetric channels, $\beta = \delta_1 = \delta_2$. 

Now, \eqref{cc10} \eqref{cc20} becomes
\begin{align}
& f_1(x)f_2(x+\beta) = 0, \forall -\infty < x < \infty, \label{cc1}\\
& f_2(x)f_1(x+\beta) = 0, \forall -\infty < x < \infty. \label{cc2}
\end{align}
Problem \eqref{theprob0} becomes
\begin{align}
\max_{f_1(x),f_2(x)} & (R^{sum} = ) {\int_0^1 f_1(x) + f_2(x) \mr{d}x} \label{theprob} \\
\text{ subject to } & \eqref{indic}, \eqref{cc1}, \eqref{cc2}. \nn
\end{align}

From Lemma \ref{equivcc}, \emph{it is sufficient to only consider the case with $\beta\ge 0$, i.e. $\alpha \le 1$.}

We next derive the constrained sum-capacity using successive decoding for $\alpha\in [0,1]$, first without upper bounds on the number of messages, then with upper bounds. We will see that in symmetric channels, the constrained sum-capacity ${R^{sum}}^*$ is achievable with $R_1 = R_2$. Thus, we also use the maximum achievable \emph{symmetric} rate, denoted by $R(\alpha)$ as a function of $\alpha$, as an equivalent performance measure. $R(\alpha)$ is thus one half of the optimal value of \eqref{theprob}.

\vspace{8pt}
\subsubsection{Symmetric Capacity without Constraint on the Number of Messages}~

\begin{figure}[tb]
  \centering
  \includegraphics[scale = 0.7]{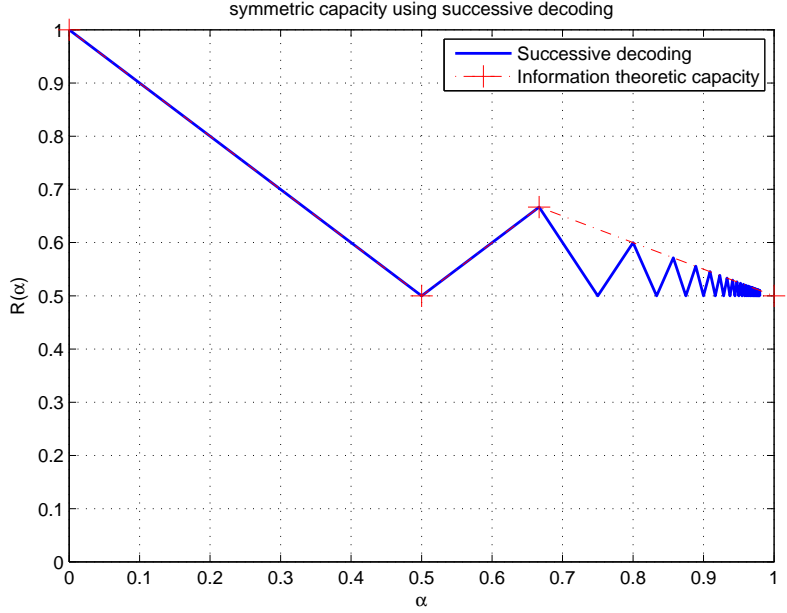}
  \caption{The symmetric capacity with successive decoding in symmetric deterministic interference channels.}
  \label{siccapafig}
\end{figure}

\begin{thm} \label{siccapa}
The maximum achievable symmetric rate using successive decoding, (i.e., having constraints \eqref{cc1}, \eqref{cc2}), $R(\alpha)~(\alpha \in [0,1])$, is characterized by
\begin{itemize}
\item $R(\alpha) = 1-\frac{\alpha}{2}$, when $\alpha = \frac{2n}{2n+1}, n = 0,1,2,\ldots$.
\item $R(\alpha) = \frac{1}{2}$, when $\alpha = \frac{2n-1}{2n}, n = 1,2,3,\ldots$.
\item In every interval $[\frac{2n}{2n+1},\frac{2n+1}{2n+2}], n = 0,1,2,\ldots$, $R(\alpha)$ is a decreasing linear function.
\item In every interval $[\frac{2n-1}{2n},\frac{2n}{2n+1}], n = 1,2,3,\ldots$, $R(\alpha)$ is an increasing linear function.
\item $R(1) = \frac{1}{2}$.
\end{itemize}
\end{thm}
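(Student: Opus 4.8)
The plan is to exploit the fact that the complementarity conditions \eqref{cc1}--\eqref{cc2} only couple levels that differ by a multiple of $\beta$, so the infinite‑dimensional problem \eqref{theprob} decouples into independent one‑dimensional problems, one for each coset $\{x_0 + k\beta : k \in \mathbb{Z}\}$ with $x_0 \in [0,\beta)$. On a fixed coset only the points lying in $[0,1]$ matter (elsewhere $f_i \equiv 0$, so the conditions involving out‑of‑range levels hold automatically), and these form a consecutive block $k = 0, 1, \ldots, N(x_0)-1$ with $N(x_0) = \lfloor (1-x_0)/\beta\rfloor + 1$. Writing $a_k = f_1(x_0+k\beta)$ and $b_k = f_2(x_0+k\beta)$, the conditions become $a_k b_{k+1} = 0$ and $b_k a_{k+1} = 0$, and the objective restricted to the coset is $\sum_k (a_k + b_k)$. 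Since the constraints are local to each coset, the optimal sum-rate is $\int_0^\beta \phi(N(x_0))\,dx_0$, where $\phi(N)$ denotes the maximum of $\sum_{k=0}^{N-1}(a_k+b_k)$ over a single chain of length $N$.

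First I would solve the per‑chain problem, claiming $\phi(N) = N + \mathbf{1}[N\text{ odd}]$. For the converse, set $c_k = a_k + b_k \in \{0,1,2\}$. The two constraints say $a_k + b_{k+1}\le 1$ and $b_k + a_{k+1}\le 1$; adding them gives the single clean inequality
\[
c_k + c_{k+1} \le 2 \quad \text{for every adjacent pair } k .
\]
Summing over the disjoint pairs $(0,1),(2,3),\ldots$ yields $\sum_k c_k \le N$ when $N$ is even, and $\sum_k c_k \le (N-1) + 2 = N+1$ when $N$ is odd (the last index being unpaired and bounded by $c_{N-1}\le 2$). For achievability --- and simultaneously for the symmetry claim $R_1=R_2$ --- I would exhibit the single explicit scheme $f_1(x) = f_2(x) = \mathbf{1}[\lfloor x/\beta\rfloor \text{ even}]$ on $[0,1]$. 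This is feasible because $\lfloor x/\beta\rfloor$ and $\lfloor (x+\beta)/\beta\rfloor$ have opposite parities, so $f_1(x)f_2(x+\beta)=f_2(x)f_1(x+\beta)=0$; on each coset it places the pattern $(1,1),(0,0),(1,1),\ldots$ starting at $k=0$, attaining exactly $\phi(N)$ while keeping $R_1=R_2$.

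It then remains to evaluate $R^{sum} = 2\int_0^1 \mathbf{1}[\lfloor x/\beta\rfloor\text{ even}]\,dx$ in closed form and halve it. Writing $M = \lfloor 1/(1-\alpha)\rfloor$ and $\rho = 1 - M\beta \in [0,\beta)$, the set $\{x\in[0,1] : \lfloor x/\beta\rfloor\text{ even}\}$ consists of $\lceil M/2\rceil$ full intervals of length $\beta$ plus the terminal piece of length $\rho$ when $M$ is even. Its measure is $\tfrac{M}{2}\beta + \rho = 1 - \tfrac{M}{2}\beta$ for even $M$ and $\tfrac{M+1}{2}\beta$ for odd $M$, so that
\[
R^{sum}(\alpha) = \begin{cases} 2 - M(1-\alpha), & M = \lfloor 1/(1-\alpha)\rfloor \text{ even},\\ (M+1)(1-\alpha), & M \text{ odd}. \end{cases}
\]
The final bookkeeping step matches parities of $M$ to the stated intervals: $\alpha\in(\tfrac{2n-1}{2n},\tfrac{2n}{2n+1})$ gives $M=2n$ (even), so $R(\alpha)=1-n(1-\alpha)$ is increasing between the endpoint values $\tfrac12$ and $\tfrac{n+1}{2n+1}$; while $\alpha\in(\tfrac{2n}{2n+1},\tfrac{2n+1}{2n+2})$ gives $M=2n+1$ (odd), so $R(\alpha)=(n+1)(1-\alpha)$ is decreasing from $\tfrac{n+1}{2n+1}$ to $\tfrac12$. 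Evaluating at the breakpoints $\alpha=\tfrac{2n}{2n+1}$ (where $1-\tfrac{\alpha}{2}=\tfrac{n+1}{2n+1}$) and $\alpha=\tfrac{2n-1}{2n}$ recovers the first two bullet points, establishing all five.

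I expect the main obstacle to be less the converse --- which collapses to the one-line inequality $c_k+c_{k+1}\le 2$ once the coset decomposition is in place --- than the careful handling of the decomposition and integration: justifying that the per-coset maxima are simultaneously and measurably attainable (the explicit construction resolves this), tracking $N(x_0)$ and the terminal fractional interval correctly, and matching the parity of $M$ to the oscillating piecewise form. The degenerate case $\alpha=1$ ($\beta=0$) must be handled separately, where \eqref{cc1}--\eqref{cc2} reduce to the pointwise condition $f_1(x)f_2(x)=0$, forcing $R^{sum}\le 1$ with equality, hence $R(1)=\tfrac12$.
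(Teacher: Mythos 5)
Your proof is correct, and it takes a genuinely different route from the paper's, although the combinatorial core is the same. The paper partitions $[0,1]$ into length-$\beta$ segments, groups them into $\mathcal{G}_1$ (odd-indexed) and $\mathcal{G}_2$ (even-indexed), and observes that \eqref{cc1}, \eqref{cc2} decouple the variables into two independent collections $C_1=\{f_1(x)|_{\mathcal{G}_1},f_2(x)|_{\mathcal{G}_2}\}$ and $C_2=\{f_1(x)|_{\mathcal{G}_2},f_2(x)|_{\mathcal{G}_1}\}$; each sub-problem (\eqref{prob1}, \eqref{prob2} and their analogues) is then solved by a segment-pairing converse, and the whole argument is run twice, once for $\frac{2n-1}{2n}<\alpha\le\frac{2n}{2n+1}$ and once for $\frac{2n}{2n+1}<\alpha\le\frac{2n+1}{2n+2}$. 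You instead disintegrate the \emph{domain} into cosets $x_0+\beta\mathbb{Z}$, reduce to a finite $0/1$ chain problem whose value $\phi(N)=N+\mathbf{1}[N\ \text{odd}]$ follows from merging the two complementarity constraints into the single inequality $c_k+c_{k+1}\le 2$, and then integrate over the fiber $x_0\in[0,\beta)$; the achievability scheme you exhibit, $f_1=f_2=\mathbf{1}[\lfloor x/\beta\rfloor\ \text{even}]$, is exactly the paper's optimal scheme \eqref{achieve0}, and it both certifies that the per-coset optima are simultaneously measurably attainable and yields $R_1=R_2$. Your finer decomposition buys uniformity: no case split on the parity of $\lfloor 1/\beta\rfloor$ until the final bookkeeping, one closed-form evaluation covering all regimes (including $\alpha\in[0,\tfrac12]$ as the case $M=1$), and an explicit treatment of the degenerate $\alpha=1$, which the paper handles by a separate time-sharing remark. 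What the paper's coarser $C_1/C_2$ decomposition buys is structure that is reused downstream: the groups $\mathcal{G}_1,\mathcal{G}_2$ and the two sub-problems are precisely the objects invoked in Lemmas \ref{noweak} and \ref{ifnostrong}, in the limited-message Theorem \ref{lessL}, in the message-count Corollary \ref{neceL}, and in the asymmetric extension of Appendix \ref{asymDC}, none of which would follow as directly from the per-coset formulation.
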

\begin{RK} We plot $R(\alpha)$ in Figure \ref{siccapafig}, compared with the information theoretic capacity \cite{BT08}. \end{RK}

The key ideas in deriving the constrained sum-capacity are to \emph{decompose} the effects of the complementarity conditions, such that the resulting sub-problems become easier to solve.

\begin{proof}[Proof of Theorem \ref{siccapa}]

i) When $\frac{2n-1}{2n}< \alpha \le \frac{2n}{2n+1}, n = 1,2,3,\ldots, \frac{1}{2n+1} \le \beta < \frac{1}{2n}$. We divide the interval $[0,1]$ into $2n+1$ segments $\{s_1,\ldots,s_{2n+1}\}$, where the first $2n$ segments have length $\beta$, and the last segment has length $1-2n\beta\in(0,\frac{1}{2n+1}]$ (cf. Figure \ref{Inc}.) With these, the complementarity conditions \eqref{cc1} \eqref{cc2} are equivalent to the following:
\begin{align}
~&
\left\{ 
\begin{array}{rl}
& \forall x\in s_1 (\Leftrightarrow x+\beta\in s_2), f_1(x)f_2(x+\beta) = 0, \\
& \forall x\in s_2 (\Leftrightarrow x+\beta\in s_3), f_2(x)f_1(x+\beta) = 0, \\
& ~~~~~~~~~~~~~~~~~~~~~~~\cdots\\
& \forall x\in s_{2n-1} (\Leftrightarrow x+\beta\in s_{2n}), f_1(x)f_2(x+\beta) = 0,
\end{array}
\right.\label{group1}\\
\text{and}~~ & ~~~~ \forall x+\beta\in s_{2n+1}, ~ f_2(x)f_1(x+\beta) = 0, \label{group1e}
\end{align}
(Relations \eqref{group1} and \eqref{group1e} correspond to the shaded strips in Figure \ref{Inc}.)

Similarly,
\begin{align}
~&
\left\{
\begin{array}{rl}
& \forall x\in s_1 (\Leftrightarrow x+\beta\in s_2), f_2(x)f_1(x+\beta) = 0, \\
& \forall x\in s_2 (\Leftrightarrow x+\beta\in s_3), f_1(x)f_2(x+\beta) = 0, \\
& ~~~~~~~~~~~~~~~~~~~~~~~\cdots\\
& \forall x\in s_{2n-1} (\Leftrightarrow x+\beta\in s_{2n}), f_2(x)f_1(x+\beta) = 0,
\end{array}
\right.\label{group2}\\
\text{and}~~ & ~~~~ \forall x+\beta\in s_{2n+1}, ~ f_1(x)f_2(x+\beta) = 0. \label{group2e}
\end{align}

\begin{figure}[tb]
  \centering
  \includegraphics[scale = 0.6]{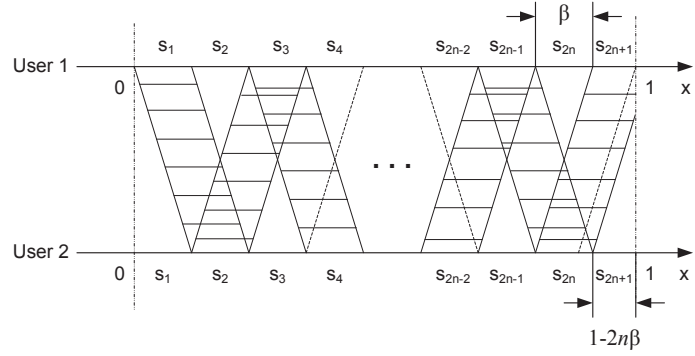}
  \caption{Segmentation of the information levels, $\frac{2n-1}{2n}< \alpha \le \frac{2n}{2n+1}$.}
  \label{Inc}
\end{figure}

We partition the set of all segments into two groups:
\begin{equation*}\mc{G}_1 = s_1\cup s_3\cup\ldots\cup s_{2n+1} \text{ and } \mc{G}_2 = s_2\cup s_4\cup \ldots\cup s_{2n}.\end{equation*}
Note that
\begin{itemize}
\item \eqref{group1} and \eqref{group1e} are constraints on $f_1(x)$ with support in $\mc{G}_1$, and on $f_2(x)$ with support in $\mc{G}_2$.
\item \eqref{group2} and \eqref{group2e} are constraints on $f_1(x)$ with support in $\mc{G}_2$, and on $f_2(x)$ with support in $\mc{G}_1$.
\end{itemize}
Consequently, instead of viewing the (infinite number of) optimization variables as $f_1(x)|_{[0,1]}$ and $f_2(x)|_{[0,1]}$, it is more convenient to view them as
\begin{equation}
C_1\triangleq\{f_1(x)|_{\mc{G}_1}, f_2(x)|_{\mc{G}_2}\} \text{ and } C_2\triangleq\{f_1(x)|_{\mc{G}_2}, f_2(x)|_{\mc{G}_1}\}, \label{c1c2}
\end{equation}
because there is \emph{no constraint between $C_1$ and $C_2$} from the complementarity conditions. In other words, $C_1$ and $C_2$ can be optimized \emph{independently} of each other. Define
\begin{align}
R_{C_1}^{sum} \triangleq \int_{\mc{G}_1} f_1(x) \mr{d}x + \int_{\mc{G}_2} f_2(x) \mr{d}x, \nn\\
R_{C_2}^{sum} \triangleq \int_{\mc{G}_2} f_1(x) \mr{d}x + \int_{\mc{G}_1} f_2(x) \mr{d}x. \nn
\end{align}
Clearly, $R^{sum} = R_{C_1}^{sum} + R_{C_2}^{sum}$. Hence \eqref{theprob} can be solved by separately solving the following two sub-problems:
\begin{align}
\max_{f_1(x)|_{\mc{G}_1}, f_2(x)|_{\mc{G}_2}} & (R_{C_1}^{sum}=)\int_{\mc{G}_1} f_1(x) \mr{d}x + \int_{\mc{G}_2} f_2(x) \mr{d}x \label{prob1} \\
\text{ subject to } & \eqref{indic}, \eqref{group1}, \eqref{group1e}, \nn
\end{align}
and
\begin{align}
\max_{f_1(x)|_{\mc{G}_2}, f_2(x)|_{\mc{G}_1}} & (R_{C_2}^{sum}=)\int_{\mc{G}_2} f_1(x) \mr{d}x + \int_{\mc{G}_1} f_2(x) \mr{d}x \label{prob2} \\
\text{ subject to } & \eqref{indic}, \eqref{group2}, \eqref{group2e}. \nn
\end{align}

We now prove that the optimal value of \eqref{prob1} is ${R_{C_1}^{sum}}^*=1-n\beta$:
\begin{itemize}
\item (Achievability:) $1-n\beta$ is achievable with $f_1(x) = 1,\forall x\in\mc{G}_1$, and $f_2(x) = 0,\forall x\in\mc{G}_2$.
\item (Converse:) $\eqref{group1} \Rightarrow \forall i\in\{1,2,\ldots,n\},~\int_{s_{2i-1}} f_1(x) \mr{d}x + \int_{s_{2i}} f_2(x) \mr{d}x \le \beta$
\begin{align}
\Rightarrow \int_{\mc{G}_1} f_1(x) \mr{d}x + \int_{\mc{G}_2} f_2(x) \mr{d}x = & \sum_{i = 1}^n \big(\int_{s_{2i-1}} f_1(x) \mr{d}x + \int_{s_{2i}} f_2(x) \mr{d}x \big) + \int_{s_{2i+1}} f_1(x) \mr{d}x \nn\\
\le & \beta\cdot n + (1-2n\beta) = 1-n\beta.
\end{align}
\end{itemize}
By symmetry, the solution of \eqref{prob2} can be obtained similarly, and the optimal value is ${R_{C_2}^{sum}}^* = 1-n\beta$ as well. Therefore, the optimal value of \eqref{theprob} is ${R^{sum}}^* = 2(1-n\beta)$.

As the above maximum achievable scheme is symmetric, i.e.,
\begin{equation} \label{achieve}
f_1(x) = f_2(x) = \left\{
\begin{array}{rl}
1, & \forall x\in \mc{G}_1 \\
0, & \forall x\in \mc{G}_2 \\
\end{array}
\right. ,
\end{equation}
the symmetric capacity is
\begin{equation}
R(\alpha) = 1-n\beta = n\alpha + 1-n.
\end{equation}
Clearly, $R(\alpha)$ is an \emph{increasing linear} function of $\alpha$ in every interval $(\frac{2n-1}{2n}, \frac{2n}{2n+1}], n = 1,2,3,\ldots$. It can be verified that $R(\alpha)|_{\frac{2n-1}{2n}} = \frac{1}{2}$, and $R(\alpha)|_{\frac{2n}{2n+1}} = 1 - \frac{\alpha}{2}$.

\vspace{9pt}
ii) When $\frac{2n}{2n+1}< \alpha \le \frac{2n+1}{2n+2}, n = 0,1,2,\ldots, \frac{1}{2n+2} \le \beta < \frac{1}{2n+1}$. Similarly to i), we divide the interval $[0,1]$ into $2n+2$ segments $\{s_1,\ldots,s_{2n+2}\}$, where the first $2n+1$ segments have length $\beta$, and the last segment has length $1-(2n+1)\beta\in(0,\frac{1}{2n+2}]$ (cf. Figure \ref{Dec}). Then, the complementarity conditions \eqref{cc1}, \eqref{cc2} are equivalent to the following:
\begin{align}
~& \eqref{group1}, \eqref{group1e} \text{ and } f_1(x)f_2(x+\beta) = 0, \forall x+\beta\in s_{2n+2}, \label{group1ee}\\
\text{and}~~ & \eqref{group2}, \eqref{group2e} \text{ and } f_2(x)f_1(x+\beta) = 0, \forall x+\beta\in s_{2n+2}. \label{group2ee}
\end{align}

\begin{figure}[tb]
  \centering
  \includegraphics[scale = 0.6]{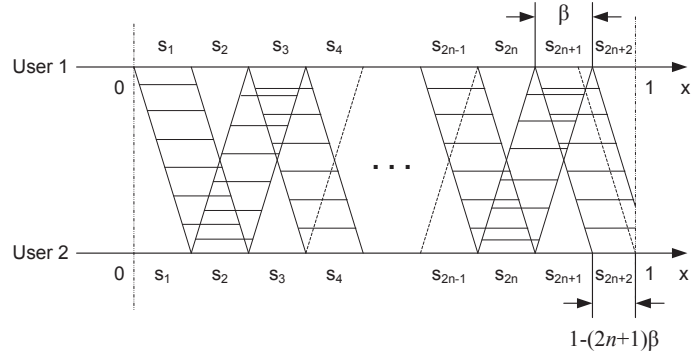}
  \caption{Segmentation of the information levels, $\frac{2n}{2n+1}< \alpha \le \frac{2n+1}{2n+2}$.}
  \label{Dec}
\end{figure}

Similarly to i), with $\mc{G}_1 = s_1\cup s_3\cup \ldots\cup s_{2n+1}$ and $\mc{G}_2 = s_2\cup s_4\cup \ldots\cup s_{2n+2}$, \eqref{theprob} can be solved by separately solving the following two sub-problems:
\begin{align}
\max_{f_1(x)|_{\mc{G}_1}, f_2(x)|_{\mc{G}_2}} & (R_{C_1}^{sum}=)\int_{\mc{G}_1} f_1(x) \mr{d}x + \int_{\mc{G}_2} f_2(x) \mr{d}x \label{prob11} \\
\text{ subject to } & \eqref{indic}, \eqref{group1}, \eqref{group1e}, \eqref{group1ee}, \nn
\end{align}
and
\begin{align}
\max_{f_1(x)|_{\mc{G}_2}, f_2(x)|_{\mc{G}_1}} & (R_{C_2}^{sum}=)\int_{\mc{G}_2} f_1(x) \mr{d}x + \int_{\mc{G}_1} f_2(x) \mr{d}x \label{prob22} \\
\text{ subject to } & \eqref{indic}, \eqref{group2}, \eqref{group2e}, \eqref{group2ee}. \nn
\end{align}

We now prove that the optimal value of \eqref{prob11} is $(n+1)\beta$:
\begin{itemize}
\item (Achievability:) $(n+1)\beta$ is achievable with $f_1(x) = 1,\forall x\in\mc{G}_1$, and $f_2(x) = 0,\forall x\in\mc{G}_2$.
\item (Converse:) $\eqref{group1}, \eqref{group1e}, \eqref{group1ee} \Rightarrow \forall i\in\{1,2,\ldots,n+1\},~\int_{s_{2i-1}} f_1(x) \mr{d}x + \int_{s_{2i}} f_2(x) \mr{d}x \le \beta$
\begin{align}
\Rightarrow \int_{\mc{G}_1} f_1(x) \mr{d}x + \int_{\mc{G}_2} f_2(x) \mr{d}x = & \sum_{i = 1}^{n+1} \big(\int_{s_{2i-1}} f_1(x) \mr{d}x + \int_{s_{2i}} f_2(x) \mr{d}x \big) \nn\\
\le & (n+1)\beta.
\end{align}
\end{itemize}
By symmetry, the solution of \eqref{prob22} can be obtained similarly. Thus, the optimal value of \eqref{theprob} is $2(n+1)\beta$. The maximum achievable scheme is also characterized by \eqref{achieve}, and the symmetric rate is
\begin{equation}
R(\alpha) = (n+1)\beta = -(n+1)\alpha + n+1.
\end{equation}
Clearly, $R(\alpha)$ is a \emph{decreasing linear} function of $\alpha$ in every interval $(\frac{2n}{2n+1}, \frac{2n+1}{2n+2}], n = 0,1,2,\ldots$. It can be verified that $R(\alpha)|_{\frac{2n}{2n+1}} = 1 - \frac{\alpha}{2}$, and $R(\alpha)|_{\frac{2n+1}{2n+2}} = \frac{1}{2}$.

\vspace{9pt}
iii) It is clear that $R(0) = 1$, which is achievable with $f_1(x) = f_2(x) = 1, \forall x\in (0,1)$, and $R(1) = \frac{1}{2}$, which is achievable by time sharing $\left\{\begin{array}{rl} f_1(x) = 1, & x\in[0,1]\\ f_2(x) = 0, & x\in[0,1] \end{array}\right.$ and $\left\{\begin{array}{rl} f_1(x) = 0, & x\in[0,1]\\ f_2(x) = 1, & x\in[0,1] \end{array}\right.$.
\end{proof}

\vspace{10pt}
We summarize the optimal scheme that achieves the constrained symmetric capacity as follows:
\begin{cor} \label{optachsym}
When $\alpha\in(0,1)$, the constrained symmetric capacity is achievable with
\beq \label{achieve0}
f_1(x) = f_2(x) = \left\{
\begin{array}{rl}
1, & \forall x\in \mc{G}_1 \\
0, & \forall x\in \mc{G}_2 \\
\end{array}
\right.,
\eeq
where $\mc{G}_1 = \bigcup_{i=1,2,\ldots}s_{2i-1}$ and $\mc{G}_2 = \bigcup_{i=1,2,\ldots}s_{2i}$.
\end{cor}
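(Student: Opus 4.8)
The plan is to read the corollary off directly from the achievability half of the proof of Theorem \ref{siccapa}, consolidating the two parameter regimes into a single statement. First I would observe that in both case (i) ($\frac{2n-1}{2n}<\alpha\le\frac{2n}{2n+1}$) and case (ii) ($\frac{2n}{2n+1}<\alpha\le\frac{2n+1}{2n+2}$) the interval $[0,1]$ is cut into consecutive segments of length $\beta$ (with a shorter terminal segment), and the decoupling into the independent clusters $C_1$ and $C_2$ of \eqref{c1c2} reduces \eqref{theprob} to two sub-problems whose optima were attained by $f_1\equiv 1$ on $\mc{G}_1$, $f_2\equiv 0$ on $\mc{G}_2$ (for $C_1$) and, by symmetry, by $f_2\equiv 1$ on $\mc{G}_1$, $f_1\equiv 0$ on $\mc{G}_2$ (for $C_2$). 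Superposing these two per-cluster optimizers yields exactly \eqref{achieve0}: both $f_1$ and $f_2$ equal $1$ on the odd-indexed segments $\mc{G}_1$ and $0$ on the even-indexed segments $\mc{G}_2$. The only bookkeeping point is that the odd/even grouping rule is literally identical in the two regimes; only the total number of segments ($2n+1$ versus $2n+2$), and hence the parity of the terminal segment's membership, differs.

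Next I would confirm feasibility of the unified scheme directly against \eqref{cc1}--\eqref{cc2}, rather than invoking only the abstract no-cross-constraint argument. For any $x\in s_{2i-1}$ we have $x+\beta\in s_{2i}$, so $f_2(x+\beta)=0$ and the product $f_1(x)f_2(x+\beta)$ vanishes; for $x\in s_{2i}$ we already have $f_1(x)=0$. Swapping the roles of $f_1$ and $f_2$ disposes of \eqref{cc2} in the same way. Hence \eqref{achieve0} satisfies both complementarity conditions for every $\alpha\in(0,1)$.

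Finally I would check that the scheme attains the claimed value with $R_1=R_2$. Since $f_1=f_2$ and both integrate to the total length of $\mc{G}_1$, we have $R_1=R_2=\int_{\mc{G}_1}1\,\mr{d}x$, which equals $1-n\beta$ in case (i) and $(n+1)\beta$ in case (ii) --- precisely the symmetric rate $R(\alpha)$ established in Theorem \ref{siccapa}. Thus the sum-rate $2R(\alpha)$ is achieved, and achieved symmetrically, which is what justifies using $R(\alpha)$ as the symmetric-capacity measure.

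I do not expect a genuine obstacle: the corollary merely consolidates results already proved, so the effort lies in matching notation across the two cases and in the one-line feasibility check. The single point that deserves care is that the symmetric choice $f_1=f_2$ must simultaneously optimize both decoupled clusters; this holds because the two per-cluster optimizers are mutually consistent --- one sets $f_1=1,f_2=0$ and the other $f_2=1,f_1=0$ on the same $\mc{G}_1$ --- so their superposition sacrifices nothing in either cluster.
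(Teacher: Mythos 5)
Your proposal is correct and takes essentially the same route as the paper: Corollary \ref{optachsym} is presented there as a direct summary of the achievability halves of cases i) and ii) in the proof of Theorem \ref{siccapa}, where the two decoupled sub-problems on $C_1$ and $C_2$ are each optimized by schemes that superpose consistently into \eqref{achieve0}. Your added explicit check of \eqref{cc1}--\eqref{cc2} and of the rate $R_1=R_2=\int_{\mc{G}_1}1\,\mr{d}x$ is a correct (if redundant) supplement, with the only imprecision being that for the last odd segment $x+\beta$ may land beyond $[0,1]$ rather than in an even segment, where $f_2$ vanishes by \eqref{indic} anyway --- a measure-zero boundary matter the paper likewise ignores.
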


In the special cases when $\alpha = \frac{2n-1}{2n}, (n = 1,2,3,\ldots,)$ and $\alpha = 1$, the constrained symmetric capacity drops to $\frac{1}{2}$ which is also achievable by time sharing $\left\{\begin{array}{rl} f_1(x) = 1, & x\in[0,1]\\ f_2(x) = 0, & x\in[0,1] \end{array}\right.$ and $\left\{\begin{array}{rl} f_1(x) = 0, & x\in[0,1]\\ f_2(x) = 1, & x\in[0,1] \end{array}\right.$.

We observe that the \emph{numbers of messages used} by the two users --- $L_1, L_2$ --- in the above optimal schemes are as follows:
\begin{cor} \label{neceL}~

\begin{itemize}
\item when $\alpha\in(\frac{2n-1}{2n}, \frac{2n+1}{2n+2}), (n = 1,2,3,\ldots)$, $L_1 = L_2 = n+1$;
\item when $\alpha\in[0, \frac{1}{2}]$, $\alpha = \frac{2n-1}{2n}, (n = 1,2,3,\ldots)$, or $\alpha=1$, $L_1 = L_2 = 1$.
\end{itemize}
\end{cor}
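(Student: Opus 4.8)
The plan is to read off $L_1$ and $L_2$ directly from the explicit optimal schemes already constructed, using Remark \ref{numcodes}: the number of messages of user $i$ equals the number of \emph{maximal} subintervals of $I_i$ on which $f_i=1$, where two active segments count as distinct messages precisely when they are separated by a subinterval on which $f_i=0$. Since the optimal scheme of Corollary \ref{optachsym} is symmetric, $f_1\equiv f_2$, both users share the same active set, so $L_1=L_2$ automatically and it suffices to count the maximal active intervals of $\mc{G}_1$.

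For the first bullet, $\alpha\in(\frac{2n-1}{2n},\frac{2n+1}{2n+2})$, I would split along the two regimes of the proof of Theorem \ref{siccapa}. In regime (i), $\frac{2n-1}{2n}<\alpha\le\frac{2n}{2n+1}$, the interval $[0,1]$ is cut into $2n+1$ segments and $\mc{G}_1=s_1\cup s_3\cup\cdots\cup s_{2n+1}$ consists of the $n+1$ odd-indexed segments; in regime (ii), $\frac{2n}{2n+1}<\alpha<\frac{2n+1}{2n+2}$, there are $2n+2$ segments and $\mc{G}_1=s_1\cup s_3\cup\cdots\cup s_{2n+1}$ again collects exactly $n+1$ odd-indexed segments. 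In both regimes each consecutive pair of active segments is separated by an even-indexed segment lying in $\mc{G}_2$, on which $f_1=f_2=0$, so Remark \ref{numcodes} forces the $n+1$ pieces to be distinct messages; hence $L_1=L_2=n+1$.

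For the second bullet I would treat the three families of special values separately. When $\alpha\in(0,\frac{1}{2}]$ the scheme lies in regime (ii) with $n=0$, so there are only two segments and $\mc{G}_1=s_1$ is a single active interval, giving $L_1=L_2=1$; at $\alpha=0$ the scheme $f_1=f_2=1$ on all of $[0,1]$ is likewise one interval. At $\alpha=\frac{2n-1}{2n}$ and $\alpha=1$, where the constrained capacity drops to $\frac{1}{2}$, I instead invoke the time-sharing scheme used in the theorem: each user is active on the whole of $[0,1]$ in its own time slot and silent otherwise, which is a single message per user, so $L_1=L_2=1$.

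The only point needing care (and the only place where the \emph{open}-interval hypothesis on $\alpha$ enters) is verifying that the active segments are genuinely separated and non-degenerate, so that they cannot be merged into fewer messages under Remark \ref{numcodes}. Concretely I must check that each separating even-indexed segment has positive length $\beta>0$, and that the terminal segment $s_{2n+1}$ in regime (i) has length $1-2n\beta>0$; the latter follows from $\beta<\frac{1}{2n}$, i.e. from the strict inequality $\alpha>\frac{2n-1}{2n}$. Once these non-degeneracy checks are in place, every remaining step is a direct count of odd-indexed segments, so I do not expect any substantive difficulty beyond this bookkeeping.
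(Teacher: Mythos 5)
Your proposal is correct and follows exactly the reasoning the paper leaves implicit: the corollary is read off from the optimal scheme of Corollary \ref{optachsym} by counting the $n+1$ odd-indexed segments of $\mc{G}_1$ as distinct messages via Remark \ref{numcodes}, with the time-sharing single-message schemes covering $\alpha=\frac{2n-1}{2n}$ and $\alpha=1$. Your added non-degeneracy check (that the separating segments and the terminal segment $s_{2n+1}$ have positive length precisely because $\alpha>\frac{2n-1}{2n}$ strictly) is the right bookkeeping and matches the segment constructions in the proof of Theorem \ref{siccapa}.
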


\begin{RK} \label{realvsint}
In the original formulation of the deterministic channel model \cite{ADTjour}, $\{n_{ij}\}$ are considered to be \emph{integers}, and the achievable scheme must also have integer bit-levels. In this case, $\alpha = \frac{n_{12}}{n_{11}}$ is a \emph{rational} number. As a result, the optimal scheme \eqref{achieve0} will consist of active segments $\mc{G}_1$ that have rational boundaries with the same denominator $n_{11}$. 
This indeed corresponds to an integer bit-level solution.
\end{RK}

From Theorem \ref{siccapa} (cf. Figure \ref{siccapafig}), it is interesting to see that the constrained symmetric capacity oscillates as a function of $\alpha$ between the information theoretic capacity and the baseline of $1/2$. This phenomenon is a consequence of the complementarity conditions. In Section \ref{disc}, we further discuss the connections of this result to other coding-decoding constraints.

\vspace{8pt}
\subsubsection{The Case with a Limited Number of Messages}~ \label{limsym}

In this subsection, we find the maximum achievable sum/symmetric rate using successive decoding when there are constraints on the maximum number of messages for the two users respectively. Clearly, the constrained symmetric capacity achieved with $\alpha\in[0,1]$ will be lower than $R(\alpha)$. We start with the following two lemmas, whose proofs are relegated to Appendix \ref{prflemlim}:

\begin{lem} \label{noweak}
If there exists a segment with an even index $s_{2i}~(i\ge1)$ and $s_{2i}$ does not end at 1, such that
\begin{equation*}
f_1(x) = 1,\forall x\in s_{2i}, \text{ \emph{or} } f_2(x) = 1,\forall x\in s_{2i},
\end{equation*}
(with $f_i(x)$ defined as in \eqref{indic},) then $R^{sum} \le 1$.
\end{lem}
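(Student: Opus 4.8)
The plan is to convert the pointwise complementarity conditions \eqref{cc1}, \eqref{cc2} into finitely many integral inequalities over the segments $s_1,\dots,s_M$ of the proof of Theorem~\ref{siccapa}, and then to bound $R^{sum}$ by splitting the unit interval at the right endpoint of $s_{2i}$. For each segment write $a_j=\int_{s_j}f_1\,\mr{d}x$ and $b_j=\int_{s_j}f_2\,\mr{d}x$. Since \eqref{cc1} and \eqref{cc2} are exchanged by swapping $f_1\leftrightarrow f_2$, and $R^{sum}$ is symmetric, it is WLOG to assume $f_1(x)=1$ on all of $s_{2i}$, i.e.\ $a_{2i}=\beta$. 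Reading \eqref{cc1} on two adjacent full-length segments as the statement that $\{f_1=1\}+\beta$ and $\{f_2=1\}$ are disjoint subsets of $s_{j+1}$ gives $a_j+b_{j+1}\le\beta$, and likewise \eqref{cc2} gives $b_j+a_{j+1}\le\beta$. Applying the first with $j=2i$ and using $a_{2i}=\beta$ forces $b_{2i+1}=0$: user~2 is silent on the whole segment immediately to the right of $s_{2i}$. This single consequence is the only place the hypothesis is used.

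Next I would split $R^{sum}=\sum_j(a_j+b_j)$ at $2i\beta$. Because $s_{2i}$ does not end at $1$, the left block $s_1,\dots,s_{2i}$ consists entirely of full-length segments; pairing, for each $k=1,\dots,i$, the four quantities of $(s_{2k-1},s_{2k})$ through $a_{2k-1}+b_{2k}\le\beta$ and $b_{2k-1}+a_{2k}\le\beta$ covers every term of the block exactly once and bounds its contribution by $2i\beta$. For the right block $s_{2i+1},\dots,s_M$, I chain the inequalities $a_j+b_{j+1}\le\beta$ for $j=2i+1,\dots,M-2$; these cover $a_{2i+1},\dots,a_{M-2}$ and $b_{2i+2},\dots,b_{M-1}$, leaving uncovered only $a_{M-1},a_M,b_M$ together with $b_{2i+1}$, which vanishes. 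Using the length identity $1-2i\beta=(M-1-2i)\beta+|s_M|$, the desired bound $R^{sum}\le1$ then reduces to showing $a_{M-1}+a_M+b_M\le\beta+|s_M|$.

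The main obstacle is precisely this last inequality, where the shortened final segment $s_M$ must be handled with care. I expect the key step to be that \eqref{cc1} between $s_{M-1}$ and $s_M$ only couples $f_1$ on the sub-interval $L=[(M-2)\beta,\,1-\beta]$ of $s_{M-1}$ with $f_2$ on $s_M$, giving $\int_L f_1\,\mr{d}x+b_M\le|s_M|$; combined with the trivial $\int_{s_{M-1}\setminus L}f_1\,\mr{d}x\le\beta-|s_M|$ this yields $a_{M-1}+b_M\le\beta$, and hence $a_{M-1}+a_M+b_M\le\beta+|s_M|$ as needed. Finally I would dispose of the degenerate case $M=2i+1$, in which the right block is the single short segment $s_M=s_{2i+1}$: there the hypothesis gives $b_M=0$ directly and $a_M\le|s_M|=1-2i\beta$ finishes immediately. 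The argument uses only $M\beta\ge1$ and that $s_{2i}$ is a full interior segment, so it applies verbatim in both the increasing and decreasing regimes of Theorem~\ref{siccapa}.
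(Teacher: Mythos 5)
Your proof is correct, but it takes a genuinely different route from the paper's. The paper proves Lemma \ref{noweak} by reusing the machinery of Theorem \ref{siccapa}: it splits the variables into the two decoupled groups $C_1=\{f_1(x)|_{\mc{G}_1}, f_2(x)|_{\mc{G}_2}\}$ and $C_2=\{f_1(x)|_{\mc{G}_2}, f_2(x)|_{\mc{G}_1}\}$ of \eqref{c1c2}, partitions $C_1$ into three blocks around $s_{2i}$, notes that the hypothesis forces the \emph{other} user to be silent on \emph{both} neighboring segments $s_{2i-1}\cup s_{2i+1}$, and then invokes the sub-problem optima of Theorem \ref{siccapa} to conclude that, under the hypothesis, the sum-rate within $C_1$ is at most the measure of $\mc{G}_2$ and that within $C_2$ at most the measure of $\mc{G}_1$, whence $R^{sum}\le 1$ with the extremal scheme $f_2\equiv 1$, $f_1\equiv 0$. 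You never use the $C_1$/$C_2$ decomposition: you work directly with the segment integrals $a_j,b_j$, exploit that \eqref{cc1} and \eqref{cc2} hold for \emph{all} $x$ (so both pairings $a_j+b_{j+1}\le\beta$ and $b_j+a_{j+1}\le\beta$ are available on every adjacent full-length pair, not only the ones the paper groups into \eqref{group1} or \eqref{group2}), split spatially at the right endpoint of $s_{2i}$, bound the left block by $2i\beta$ unconditionally, and bound the right block by $1-2i\beta$ using the hypothesis only through $b_{2i+1}=0$. Both arguments are sound; your case analysis ($M=2i+1$ degenerate, chain empty when $M=2i+2$, general otherwise) is complete. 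What your approach buys: it is self-contained (no appeal to the converse structure of Theorem \ref{siccapa}), it handles the short final segment rigorously via the sub-interval $L=[(M-2)\beta,\,1-\beta]$ of $s_{M-1}$ — a boundary detail the paper's figure-based argument glosses over — and it needs a strictly weaker consequence of the hypothesis (one silenced neighbor rather than two). What the paper's approach buys: brevity given Theorem \ref{siccapa}, and it exhibits the maximizing scheme explicitly rather than only establishing the upper bound, which is what justifies the paper's closing claim that the sum-rate of 1 is not merely a bound but is exactly the maximum achievable under the assumed condition.
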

\begin{lem} \label{ifnostrong}
If there exists a segment with an odd index $s_{2i-1}~(i\ge1)$, such that
\begin{equation*}
f_1(x) = 0,\forall x\in s_{2i-1}, \text{ \emph{or} } f_2(x) = 0,\forall x\in s_{2i-1},
\end{equation*}
then $R^{sum} \le 1$.
\end{lem}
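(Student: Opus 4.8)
The plan is to prove Lemma \ref{ifnostrong} by exploiting the structural symmetry of the complementarity conditions across the segmentation $\{s_1,\ldots\}$ established in the proof of Theorem \ref{siccapa}. The key observation is that the hypothesis asserts an odd-indexed segment $s_{2i-1}$ is entirely inactive for one of the users. Since an odd-indexed segment belongs to $\mc{G}_1$, and since in the optimal scheme \eqref{achieve0} the full sum-rate is extracted precisely by making every odd segment fully active, forcing one user off an odd segment must degrade the achievable sum-rate. My goal is to convert this local deficiency into the global bound $R^{sum}\le 1$.

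First I would fix the segmentation from Theorem \ref{siccapa} appropriate to the value of $\alpha$ (the argument is identical in both the increasing and decreasing regimes, differing only in whether the final segment has index $2n+1$ or $2n+2$), and recall the chain of complementarity constraints \eqref{group1}, \eqref{group1e} and \eqref{group2}, \eqref{group2e} coupling consecutive segments. Each such constraint says that on a length-$\beta$ overlap between $s_j$ and $s_{j+1}$, one of $f_1,f_2$ must vanish, so that the total active measure contributed jointly by an adjacent pair is at most $\beta$. The plan is to partition the whole interval $[0,1]$ into adjacent pairs anchored around the given inactive segment $s_{2i-1}$, and to bound the sum-rate by summing these pairwise bounds. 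Concretely, I would write $R^{sum}=\int_0^1 (f_1+f_2)\,\mr{d}x$ and group the segments so that the inactive segment $s_{2i-1}$ couples with a neighbor to yield a strictly better-than-generic bound, while all remaining segments pair up to contribute at most their length each.

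The main technical step is the bookkeeping: I would show that, because $f_1$ (say) vanishes on all of $s_{2i-1}$, the relevant complementarity relation linking $s_{2i-1}$ to $s_{2i}$ (or to $s_{2i-2}$) no longer ``costs'' a full $\beta$ of lost activity in the worst case, but instead the inactivity propagates. The cleanest way to carry this out is to combine the two groups of constraints \eqref{group1}--\eqref{group2e}: constraint \eqref{cc1} relates $f_1$ on $s_{2i-1}$ to $f_2$ on $s_{2i}$, while \eqref{cc2} relates $f_2$ on $s_{2i-1}$ to $f_1$ on $s_{2i}$, and using both together with $f_1|_{s_{2i-1}}=0$ I would deduce a telescoping inequality across the full chain of segments. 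Summing these telescoped contributions over all pairs should collapse exactly to $R^{sum}\le 1$, with the inactive odd segment being precisely what removes the ``extra'' $\beta$ that would otherwise have lifted the bound above $1$.

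I expect the hard part to be handling the two ends of the segment chain cleanly, since the last segment has length strictly less than $\beta$ and the boundary segments are coupled by only one of the two groups of constraints rather than both. The symmetry between the ``$f_1=0$ on $s_{2i-1}$'' and ``$f_2=0$ on $s_{2i-1}$'' cases should be immediate from the symmetric roles of the two users in \eqref{cc1}, \eqref{cc2}, so I would prove one and invoke symmetry for the other. A secondary subtlety is ensuring the pairing argument is valid for every admissible $i\ge 1$ regardless of which of the two $\alpha$-regimes we are in; I anticipate that re-indexing the segments so the distinguished inactive segment is the anchor, rather than always pairing $(s_{2i-1},s_{2i})$, will be the cleanest device, and that once this pairing is set up the inequalities are routine applications of the fact that each overlap contributes at most $\beta$ and the total length is $1$.
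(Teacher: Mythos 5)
Your proposal has a genuine gap: the local improvement you are counting on does not exist, and the bookkeeping you describe does not close. The complementarity conditions \eqref{cc1}, \eqref{cc2} only couple \emph{different} users, so $f_1\equiv 0$ on $s_{2i-1}$ forces nothing elsewhere --- it \emph{relaxes} the constraints tying $f_1|_{s_{2i-1}}$ to $f_2|_{s_{2i-2}}$ and $f_2|_{s_{2i}}$. In particular, the assignment $f_2\equiv 1$, $f_1\equiv 0$ on all of $[0,1]$ satisfies every constraint and the hypothesis, and under it the anchored pair $(s_{2i-1},s_{2i})$ contributes exactly $2\beta$ to $R^{sum}$; so there is no ``strictly better-than-generic'' bound for that pair, and inactivity does not ``propagate'' or ``telescope.'' With no local improvement, a single pairing of $[0,1]$ cannot sum to $1$: in the regime $\frac{2n-1}{2n}<\alpha\le\frac{2n}{2n+1}$ there are $2n+1$ segments, and the best you can do is leave $s_{2i-1}$ as a singleton (contributing at most $\beta$, from $f_2$ alone) and pair the remaining $2n$ segments (at most $2\beta$ each), giving $R^{sum}\le(2n+1)\beta$, which is $\ge 1$ throughout this regime (equality only at $\alpha=\frac{2n}{2n+1}$). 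Your scheme closes only in the special case $i=n+1$, where the singleton is the short final segment of length $1-2n\beta$; for an interior odd segment the method is strictly lossy, because pairwise bounds on $f_1+f_2$ discard the cross-pair constraints that matter.

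The missing idea is the decomposition the paper reuses from the proof of Theorem \ref{siccapa}: split the variables into the two \emph{independent} chains $C_1=\{f_1|_{\mc{G}_1},f_2|_{\mc{G}_2}\}$ and $C_2=\{f_1|_{\mc{G}_2},f_2|_{\mc{G}_1}\}$ of \eqref{c1c2}, and treat them \emph{asymmetrically}. The hypothesis $f_1\equiv 0$ on $s_{2i-1}$ lives entirely in $C_1$: it breaks the $C_1$ chain into two sub-chains, and re-pairing within each sub-chain (so that the short final segment is absorbed into a pair instead of standing alone) gives $R_{C_1}^{sum}\le n\beta$ --- a reduction by exactly $1-2n\beta$, not by $\beta$. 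Meanwhile $C_2$ is untouched and keeps its generic bound $R_{C_2}^{sum}\le 1-n\beta$, so adding the two yields $R^{sum}\le 1$ (the other regime works out analogously, with $n\beta+\bigl(1-(2n+1)\beta\bigr)$ plus $(n+1)\beta$ summing to $1$). No single segment-level grouping applied to $f_1+f_2$ can capture this, because the hypothesis-adapted pairing is valid only for the $C_1$ variables while $C_2$ needs the original pairing. Your argument is repaired by running your pairing \emph{twice}, once per chain, with the adapted pairing used only on the chain containing the inactive segment --- which is precisely the content of the paper's proof via its partition of $C_1$ into $C_{11}$, $C_{12}$, $C_{13}$.
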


Recall that the optimal scheme \eqref{achieve0} requires that, for both users, \emph{all} segments in $\mc{G}_2$ are fully inactive, and \emph{all} segments in $\mc{G}_1$ are fully active. The above two lemmas show the cost of violating \eqref{achieve0}: if one of the segments in $\mc{G}_2$ becomes fully active for either user (cf. Lemma \ref{noweak}), or one of the segments in $\mc{G}_1$ becomes fully inactive for either user (cf. Lemma \ref{ifnostrong}), the resulting sum-rate cannot be greater than 1.
We now establish the following theorem:
\begin{thm} \label{lessL}
Denote by $L_i (i=1,2)$ the number of messages used by the $i^{th}$ user. When $\alpha\in(\frac{2n-1}{2n}, \frac{2n+1}{2n+2}), (n = 1,2,\ldots,)$ if $L_1\le n$ \emph{or} $L_2\le n$, the maximum achievable sum-rate is 1.
\end{thm}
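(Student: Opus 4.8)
The plan is to prove the theorem by contraposition, establishing the stronger statement that \emph{any} scheme achieving $R^{sum} > 1$ must use at least $n+1$ messages for \emph{each} user, i.e. $L_1 \ge n+1$ and $L_2 \ge n+1$. The contrapositive is precisely the claim that $L_1 \le n$ or $L_2 \le n$ forces $R^{sum} \le 1$. Combined with the trivial achievability of $R^{sum}=1$ (take $f_1\equiv 1$ on $[0,1]$ and $f_2\equiv 0$, which satisfies \eqref{cc1}, \eqref{cc2} and uses a single message for user 1 and none for user 2, so that $L_1=1\le n$, $L_2=0\le n$), this pins the maximum achievable sum-rate at exactly $1$.

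First I would fix the segmentation of $[0,1]$ used in the proof of Theorem \ref{siccapa}: whether $\alpha$ lies in $(\frac{2n-1}{2n},\frac{2n}{2n+1}]$ (case i, with $2n+1$ segments) or in $(\frac{2n}{2n+1},\frac{2n+1}{2n+2})$ (case ii, with $2n+2$ segments), in both cases $\mc{G}_1$ contains the $n+1$ odd-indexed segments $s_1,s_3,\ldots,s_{2n+1}$, and the even-indexed segments $s_2,s_4,\ldots,s_{2n}$ separating consecutive odd segments are exactly the $n$ even segments that do \emph{not} end at $1$. This is the only place the two sub-cases need to be distinguished, and the message-counting argument is identical for both.

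Next I would invoke the contrapositives of Lemma \ref{noweak} and Lemma \ref{ifnostrong} under the assumption $R^{sum}>1$. Lemma \ref{ifnostrong} forces that on every odd segment $s_{2i-1}$ neither $f_1$ nor $f_2$ is identically zero, so each user has an \emph{active} point in each of $s_1,s_3,\ldots,s_{2n+1}$; Lemma \ref{noweak} forces that on every even segment $s_{2i}$ not ending at $1$ neither $f_1$ nor $f_2$ is identically one, so each user has an \emph{inactive} point in each of $s_2,s_4,\ldots,s_{2n}$. Fixing user $1$, this yields an alternating pattern along increasing $x$: an active point in $s_1$, an inactive point in $s_2$, an active point in $s_3$, $\ldots$, an active point in $s_{2n+1}$.

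Finally I would apply Remark \ref{numcodes} to convert this pattern into a message count. Because between the active point in $s_{2i-1}$ and the active point in $s_{2i+1}$ there is a point with $f_1=0$ (the inactive point in $s_{2i}$), the active region containing the former and the active region containing the latter are separated, and hence correspond to \emph{distinct} messages. Letting $i$ range over $1,\ldots,n+1$ produces at least $n+1$ distinct messages for user $1$, so $L_1\ge n+1$; the symmetric argument gives $L_2\ge n+1$. The main (and essentially only delicate) step is this last one: one must ensure the alternation really produces $n+1$ \emph{separated} active regions rather than fewer, which is exactly what Remark \ref{numcodes} guarantees once the active and inactive points are correctly located in consecutive odd and even segments.
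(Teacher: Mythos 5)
Your proposal is correct and takes essentially the same route as the paper's proof: both arguments rest on Lemmas \ref{noweak} and \ref{ifnostrong} together with the message-counting Remark \ref{numcodes} applied to the $n+1$ odd segments $s_1,s_3,\ldots,s_{2n+1}$, separated by the $n$ even segments not ending at $1$. The only difference is presentational --- you run the argument contrapositively (assuming $R^{sum}>1$ and deducing $L_1\ge n+1$ and $L_2\ge n+1$), while the paper assumes $L_1\le n$ and uses the same separation/pigeonhole step to exhibit a fully inactive odd segment and then invokes Lemma \ref{ifnostrong} --- so the ingredients and logic coincide.
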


\begin{proof}
WLOG, assume that there is a constraint of $L_1\le n$.

i) First, the sum-rate of 1 is always achievable with
\begin{equation*}
f_1(x) = 1, f_2(x) = 0, \forall x\in[0,1].
\end{equation*}

ii) If there exists $s_{2i}, 1\le i\le n$, such that \emph{either} $f_1(x) = 1,\forall x\in s_{2i}$, \emph{or} $f_2(x) = 1,\forall x\in s_{2i}$, then from Lemma \ref{noweak}, the achieved sum-rate is no greater than 1.

iii) If \emph{for all} $s_{2i}, 1\le i\le n$, there exists $x_{i}$ in the interior of $s_{2i}$ such that $f_1(x_{i})= 0$:

Note that $x_{i}$ \emph{separates} the two segments $s_{2i-1}, s_{2i+1}$ for the $1^{st}$ user. From Remark \ref{numcodes}, $s_{2i-1}$ and $s_{2i+1}$ have to be \emph{two distinct messages} provided that both of them are (at least partly) active for the $1^{st}$ user. On the other hand, there are $n+1$ such segments $\mc{G}_1 = \{s_1,s_3,\ldots,s_{2n+1}\}$ (cf. Figures \ref{Inc} and \ref{Dec}), whereas the number of messages of the $1^{st}$ user is upper bounded by $L_1\le n$. Consequently, $\exists 1\le i_1\le n+1$, such that $f_1(x) = 0, \forall x\in s_{2i_1-1}$. In other words, there must be a segment in $\mc{G}_1$ that is fully inactive for the $1^{st}$ user. By Lemma \ref{ifnostrong}, in this case, the achieved sum-rate is no greater than 1.
\end{proof}

Comparing Theorem \ref{lessL} with Corollary \ref{neceL}, we conclude that if the number of messages used for \emph{either} of the two users is fewer than the number used in the optimal scheme \eqref{achieve0} (as in Corollary \ref{neceL}), the maximum achievable symmetric rate drops to $\frac{1}{2}$.
This is illustrated in Figure \ref{L2capa} with $L_1\le 2$ (or $L_2\le 2$), and in Figure \ref{L3capa} with $L_1\le 3$ (or $L_2\le 3$).

\begin{figure}[tb!]
  \centering
  \subfigure[Maximum achievable symmetric rate with $L_1\le 2$.]{
  \includegraphics[scale = 0.4]{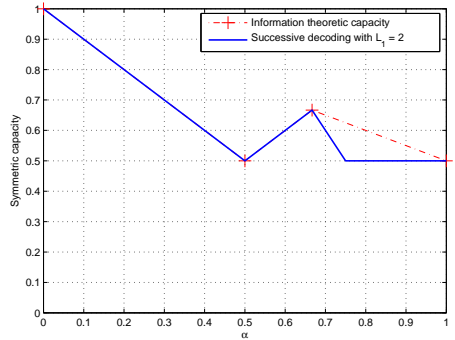}
  \label{L2capa}
  }
  \subfigure[Maximum achievable symmetric rate with $L_1\le 3$.]{
  \includegraphics[scale = 0.4]{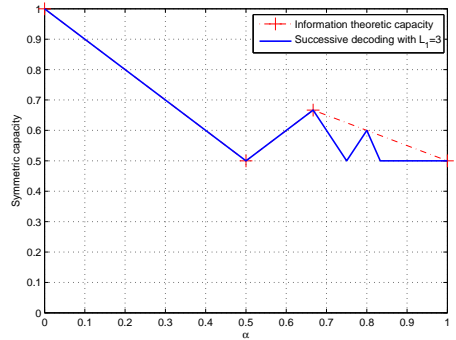}
  \label{L3capa}
  }
  \caption{The symmetric capacity with a limited number of messages.}
  \label{suball}
\end{figure}

\vspace{5pt}
Complete solutions (without and with constraints on the number of messages) in \emph{asymmetric} channels follow similar ideas, albeit more tediously. Detailed discussions are relegated to Appendix \ref{asymDC}.

\section{Approximate Sum-capacity for Successive Decoding in Gaussian Interference Channels} \label{GCsec} 
In this section, we turn our focus back to the two-user Gaussian interference channel, and consider the sum-rate maximization problem \eqref{theprobG}. Based on the relation between the deterministic channel model and the Gaussian channel model, we \emph{translate} the optimal solution of the deterministic channel into the Gaussian channel. We then derive upper bounds on the optimal value of \eqref{theprobG}, and evaluate the achievability of our translation against these upper bounds.

\subsection{Achievable Sum-rate Motivated by the Optimal Scheme in the Deterministic Channel}
As the deterministic channel model can be viewed as an approximation to the Gaussian channel model, optimal schemes of the former suggest approximately optimal schemes of the latter. In this subsection, we show the translation of the optimal scheme of the deterministic channel to that of the Gaussian channel. We show in detail \emph{two forms} (simple and fine) of the translation for symmetric interference channels:
\begin{equation*}
g_{11} = g_{22}, g_{12} = g_{21}, N_1 = N_2, \bar{p}_1 = \bar{p}_2 = \bar{p}.
\end{equation*}
The translation for asymmetric channels can be derived similarly, albeit more tediously.

\subsubsection{A simple translation of power allocation for the messages}~

\begin{figure}[b!]
  \centering
  \includegraphics[scale = 0.7]{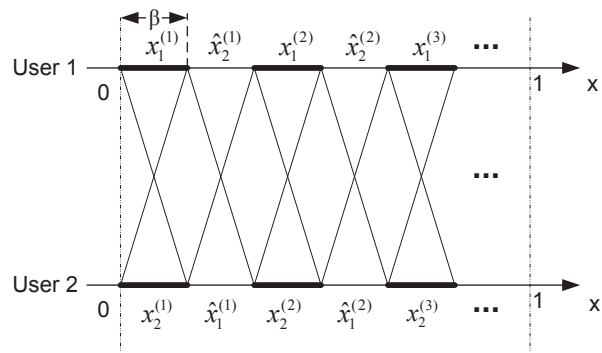}
  \caption{The optimal scheme in the symmetric deterministic interference channel.}
  \label{approx1}
\end{figure}

Recall the optimal scheme for symmetric deterministic interference channels (Corollary \ref{optachsym},) as plotted in Figure \ref{approx1}. $x_i^{(\ell)}, \ell = 1,\ldots,L$ represent the segments (or \emph{messages} as translated to the Gaussian channel) that are active for the $i^{th}$ user. Recall that 
\beq
- \beta = - (1-\alpha) = n_{21} - n_{11} = \log(\frac{g_{12}}{g_{11}}).
\eeq
Thus, a shift of $\beta$ to the right (i.e. lower information levels) in the deterministic channel approximately corresponds to a power scaling factor of $\frac{g_{12}}{g_{11}}$ in the Gaussian channel. Accordingly, a simple translation of the symmetric optimal scheme (cf. Figure \ref{approx1}) into the Gaussian channel is given as follows:



\begin{tabular}{l} \label{alg1}
\vspace{-14pt}\\

\emph{Algorithm 1: A simple translation by direct power scaling.}\\
\vspace{-15pt}\\
\hline
Step 1: Determine the number of messages $L_1 = L_2 = L$ for each user as the same number used in the \\ $~~~~~~~~~$ optimal deterministic channel scheme.\\

Step 2: Let $\frac{p^{(2)}}{p^{(1)}}=\frac{p^{(3)}}{p^{(2)}} = \ldots = \frac{p^{(L)}}{p^{(L-1)}} = \big(\frac{g_{12}}{g_{11}}\big)^2$, and normalize the power by $\sum_{\ell = 1}^{L}p^{(\ell)} = \bar{p}$.\\
\vspace{-15pt}\\
\hline
\\
\end{tabular}

\vspace{7pt}
\subsubsection{A finer translation of power allocation for the messages}~

In this part, for notational simplicity, we assume WLOG that the noise power $N_{1} = N_{2} = 1$ and $g_{11} = 1$. We consider the case where the cross channel gain is no greater than the direct channel gain
: $0\le g_{12}\le g_{11}$.

In the optimal deterministic scheme, the key property that ensures optimality is the following:
\begin{cor} \label{eqrate}
A message $x_i^{(\ell)}$ that is decoded at \emph{both} receivers is subject to the \emph{same} achievable rate constraint at both receivers.
\end{cor}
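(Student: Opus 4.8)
The plan is to establish Corollary \ref{eqrate} by directly comparing the two successive-decoding rate constraints that apply to a message decoded at both receivers, and showing that the optimal deterministic scheme \eqref{achieve0} forces these two constraints to coincide. First I would identify, for a generic active segment $x_i^{(\ell)}$ occupying some interval $s\subseteq\mc{G}_1$, exactly which signals interfere with it at each of the two receivers. The complementarity conditions \eqref{cc10}, \eqref{cc20} (equivalently the symmetric \eqref{cc1}, \eqref{cc2}) guarantee that whenever a level of user $i$ is active, the conflicting level of the other user, shifted by $\delta_1$ or $\delta_2$, is inactive; since in the symmetric channel $\delta_1=\delta_2=\beta$, the set of interfering active levels seen at receiver $i$'s own receiver and at the other receiver should turn out to be translates of each other under the optimal scheme, because $f_1$ and $f_2$ coincide on $\mc{G}_1,\mc{G}_2$ by \eqref{achieve0}.

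Next I would translate the bit-level picture back to the Gaussian power-domain rate expressions \eqref{S1}, \eqref{S2}. For a common message $x_i^{(\ell)}$ decoded at both receivers, the rate is the minimum of two logarithmic terms, one per receiver, each a ratio of the message's received power to the aggregate power of the not-yet-decoded (interfering) messages plus noise. The key computation is to show that, under the geometric power allocation dictated by the deterministic scheme (a shift of $\beta$ bit-levels corresponding to a factor $(g_{12}/g_{11})$ in received power, as made explicit just before Algorithm 1), the effective $\SINR$ of message $x_i^{(\ell)}$ at its intended receiver and at the other receiver are \emph{equal}. Concretely, at the intended receiver the desired message has full direct gain but is crowded by the levels immediately below it, whereas at the unintended receiver it is attenuated by the cross gain but correspondingly faces an interference floor that is attenuated by the same factor; the two effects cancel in the ratio, leaving identical $\SINR$ and hence identical rate constraints.

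The main obstacle I anticipate is making the boundary bookkeeping precise: the topmost and bottommost active segments, and the truncated last segment of length $1-2n\beta$ (resp. $1-(2n+1)\beta$) from the proof of Theorem \ref{siccapa}, do not have a symmetric collection of interferers on both sides, so the clean cancellation argument must be checked separately at the edges of $[0,1]$. I would handle this by treating the noise level at $n_{11}$ as the natural lower boundary and verifying that the levels falling below the noise floor contribute negligibly, so that the $\SINR$-matching identity degrades gracefully at the edges rather than breaking. A secondary subtlety is that \eqref{S1}, \eqref{S2} are stated for the two-message common/private example, so I would first argue that the same min-of-two-constraints structure holds for a general message decoded at both receivers in an $L$-message superposition, which follows from the decoding-order constraints \eqref{eq:DecodConstraints} applied at each receiver.

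Once the two constraints are shown equal, the corollary follows immediately: the $\min$ in the successive-decoding rate expression is attained by both arguments simultaneously, so a message decoded at both receivers is subject to the same achievable rate constraint at each. I would close by remarking that this equalization is precisely the property exploited in the forthcoming rate-constraint-equalization translation, motivating why the deterministic optimum lifts to a near-optimal Gaussian scheme.
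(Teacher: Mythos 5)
Your proposal misplaces the statement's domain, and this turns a one-line observation into a claim that is actually false. Corollary \ref{eqrate} is a property of the optimal scheme \emph{in the deterministic channel model} (the paper introduces it with ``In the optimal deterministic scheme, the key property that ensures optimality is the following''). In that model the achievable rate constraint for decoding a message at a receiver is simply the length of the interval of levels that the message occupies above the noise floor at that receiver. A common message of user $i$ occupies a segment of length $\beta$ at its intended receiver; at the unintended receiver its levels appear as an exact translate of that segment (shifted by $\beta$, or by $\delta_1,\delta_2$ in the asymmetric case), still entirely above the noise level, and translation preserves length. Hence both constraints equal the same segment length --- this is exactly the paper's justification, $|x_1^{(1)}|=|\hat{x}_1^{(1)}|=\beta$. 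Your first paragraph already contains this argument; had you stopped there and invoked the deterministic meaning of ``rate constraint,'' you would be done.

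Instead, your ``key computation'' tries to prove \emph{exact} SINR equality in the Gaussian channel under the geometric power allocation of Algorithm 1, and that step would fail. Equalizing the two SINRs of $x_1^{(1)}$, namely
\begin{equation*}
\frac{p^{(1)}}{(\bar{p}-p^{(1)})+g_{12}\bar{p}+1} \quad\text{and}\quad \frac{g_{12}\,p^{(1)}}{g_{12}(\bar{p}-p^{(1)})+(\bar{p}-p^{(1)})+1},
\end{equation*}
forces $p^{(1)} = 1-g_{12}+(1-g_{12}^2)\bar{p}$ (Lemma \ref{findplem}), which is \emph{not} the Algorithm 1 allocation with ratio $(g_{12}/g_{11})^2$; the additive noise terms and the fact that Gaussian interference adds in power (rather than being dominated by the largest term, as the deterministic approximation assumes) destroy the cancellation you describe. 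Indeed, if your claim held, Algorithm 2 would be redundant, whereas the paper's numerical results show it strictly outperforms Algorithm 1. In the Gaussian domain, rate-constraint equalization is a \emph{design constraint imposed to derive} the finer power allocation, not a consequence of the deterministic scheme's structure; your own concession that the identity only ``degrades gracefully'' at the boundaries is an admission of approximation, which cannot establish the exact equality the corollary asserts.
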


For example, In the optimal deterministic scheme (cf. Figure \ref{approx1}), message $x_1^{(1)}$ is subject to an achievable rate constraint of $|x_1^{(1)}|$ at the $1^{st}$ receiver, and that of $|\hat{x}_1^{(1)}|$ at the $2^{nd}$ receiver, with $|x_1^{(1)}|=|\hat{x}_1^{(1)}| = \beta$. In general, $x_1^{(1)},\ldots,x_2^{(L-1)}$ and $x_2^{(1)},\ldots,x_2^{(L-1)}$ are the messages that are \emph{decoded at both receivers}, whereas $x_1^{(L)}, x_2^{(L)}$ are decoded only at their intended receiver (and treated as noise at the other receiver.)

According to Corollary \ref{eqrate}, we show that a finer translation of the power allocation for the messages is achieved by \emph{equalizing the two rate constraints} for every common message ($x_i^{(1)},\ldots,x_i^{(L-1)}, i=1,2$). (However, rates of different common messages are not necessarily the same.)

As the $1^{st}$ step of determining the power allocations, we give the following lemma on the power allocation of $x_1^{(1)}$ (with the proof found in Appendix \ref{prooflastlem}):
\begin{lem}~ \label{findplem}

1) If $\bar{p}\le \frac{1-g_{12}}{g_{12}^2}$, then $L=1$, and $x_1^{(1)} (x_2^{(1)})$ is treated as noise at the $2^{nd}(1^{st})$ receiver, with $p^{(1)}=\bar{p}$. In this case, there is only one message for each user (as its private message.) rate constraint equalization is \emph{not} needed.

2) If $\bar{p}>\frac{1-g_{12}}{g_{12}^2}$, then $L\ge2$, and $x_1^{(1)} (x_2^{(1)})$ are decoded at both receivers. To \emph{equalize} its rate constraints at both receivers, we must have
\beq \label{findp1}
p^{(1)} = 1 - g_{12} +(1-g_{12}^2)\bar{p} ~~~(<\bar{p}).
\eeq
\end{lem}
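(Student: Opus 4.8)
The plan is to set up the rate-constraint equalization for the top message $x_1^{(1)}$ and solve the resulting equation for $p^{(1)}$, with a case split governed by whether $x_1^{(1)}$ is decoded at the unintended receiver or merely treated as noise. First I would specialize the successive-decoding rate constraints \eqref{S1}, \eqref{S2} to the symmetric setting with $g_{11}=1$, $N_1=N_2=1$, cross gain $g_{12}=g_{21}$, and a symmetric power allocation $q_i^c=q_i^p$ across users so that the two users' constraints coincide. For the top message $x_1^{(1)}$ (the one at the highest information levels, corresponding to the leftmost segment in Figure \ref{approx1}), its rate at the intended receiver sees all lower-power messages as noise, while its rate at the unintended receiver is scaled by $g_{12}$ and sees the cross-interference of the other user's lower messages. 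The equalization condition of Corollary \ref{eqrate} then amounts to setting these two $\log(1+\mathsf{SINR})$ expressions equal, i.e. equating the two SINR arguments.

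For part 1, I would observe that if the total power $\bar p$ is small enough, the would-be top message cannot in fact be decoded at the unintended receiver: the constraint at the unintended receiver (with gain $g_{12}$) is the binding one and forces the message into the ``private'' category, so $L=1$ and the single message uses all the power, $p^{(1)}=\bar p$. The threshold $\bar p \le \frac{1-g_{12}}{g_{12}^2}$ is exactly the condition under which equalization would demand a power allocation that is infeasible (or under which a single private message is already optimal), and I would derive it as the boundary where the two-receiver rates for a single full-power message cross over. For part 2, assuming $\bar p > \frac{1-g_{12}}{g_{12}^2}$ so that $x_1^{(1)}$ is genuinely decoded at both receivers, I would write the equalization equation explicitly. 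Denoting by $P_{\mathrm{rest}} = \bar p - p^{(1)}$ the power in all the remaining (lower) messages, the intended-receiver SINR numerator is $p^{(1)}$ over a denominator involving $P_{\mathrm{rest}}$ plus cross-terms, and the unintended-receiver SINR numerator is $g_{12}p^{(1)}$ over a denominator with the factor $g_{12}$ on the interfering terms; setting the two SINR arguments equal and using the symmetric structure should collapse the equation to a linear relation in $p^{(1)}$.

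The main obstacle I anticipate is bookkeeping the interference terms correctly: I must carefully identify, in the symmetric optimal deterministic scheme translated to the Gaussian channel, exactly which messages of each user appear in the denominator of $x_1^{(1)}$'s rate constraint at each receiver, and with what power scaling ($1$ versus $g_{12}$). Once the SINR arguments are written as ratios of the form $\frac{p^{(1)}}{(\text{interference}) + 1}$ and $\frac{g_{12}p^{(1)}}{(\text{cross-interference}) + 1}$, equating them and clearing denominators should yield a linear equation whose solution is $p^{(1)} = 1-g_{12} + (1-g_{12}^2)\bar p$; I would then verify $p^{(1)} < \bar p$, which holds precisely because $g_{12}(1-g_{12})\bar p + (1-g_{12}) = (1-g_{12})(1+g_{12}\bar p) > 0$ rearranges to the claimed strict inequality, and check that $p^{(1)}>0$ so the allocation is feasible. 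The threshold in part 1 should emerge consistently as the value of $\bar p$ at which \eqref{findp1} gives $p^{(1)} = \bar p$, confirming that the two cases meet at the boundary.
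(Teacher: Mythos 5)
Your overall strategy---equalize the two SINRs seen by the top message $x_1^{(1)}$, solve the resulting linear equation for $p^{(1)}$, and identify the threshold as the point where $p^{(1)}=\bar p$---is the same as the paper's. But the two places where you defer or go wrong are exactly where the content of the lemma lies. First, the interference bookkeeping that you flag as ``the main obstacle'' cannot be recovered from the generic shape ``factor $g_{12}$ on the interfering terms,'' because the two receivers use \emph{different decoding orders} and the denominators are not symmetric. At receiver 1, $x_1^{(1)}$ is decoded \emph{first}, so \emph{all} of user 2's power interferes: $\mathsf{SINR}_1 = \frac{p^{(1)}}{(\bar p - p^{(1)}) + g_{21}\bar p + 1}$. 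At receiver 2, user 2's own top message $x_2^{(1)}$ is decoded and peeled off \emph{before} $x_1^{(1)}$ is attempted, so the interference there is only the residual $(\bar p - p^{(1)})$ of user 2 at direct gain $1$, plus $g_{12}(\bar p - p^{(1)})$ from user 1's remaining messages: $\mathsf{SINR}_2 = \frac{g_{12}p^{(1)}}{g_{12}(\bar p - p^{(1)}) + (\bar p - p^{(1)}) + 1}$. This asymmetry ($g_{21}\bar p$ on one side versus $\bar p - p^{(1)}$ on the other) is what makes the equalization solvable: if you were to put user 2's full power in both denominators, the $p^{(1)}$ terms cancel upon cross-multiplying and the equation degenerates, yielding no solution at all.

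Second, your verification of $p^{(1)}<\bar p$ is incorrect, and the error obscures the logic of the case split. From \eqref{findp1} one gets $\bar p - p^{(1)} = g_{12}^2\bar p + g_{12} - 1$, which is \emph{not} unconditionally positive: it is positive exactly when $\bar p > \frac{1-g_{12}}{g_{12}^2}$. The quantity you exhibit, $(1-g_{12})(1+g_{12}\bar p)$, is indeed always positive for $g_{12}\le 1$, but it is not equal to $\bar p - p^{(1)}$, so its positivity cannot ``rearrange to'' the claimed inequality. Part 1 of the lemma exists precisely because the inequality \emph{fails} for small $\bar p$: when $\bar p \le \frac{1-g_{12}}{g_{12}^2}$, the equalizing power would satisfy $p^{(1)}\ge\bar p$, which is infeasible, and the correct conclusion is that $x_1^{(1)}$ should not be decoded at the unintended receiver---each user then sends a single private message with $p^{(1)}=\bar p$. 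Your closing remark that the two cases ``meet at the boundary'' is the right intuition, but the proof must derive $p^{(1)}<\bar p$ as a consequence of the part-2 hypothesis rather than assert it as an unconditional fact.
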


Next, we observe that after decoding $x_1^{(1)}, x_2^{(1)}$ at both receivers, determining $p^{(2)}$ for $x_1^{(2)}, x_2^{(2)}$ can be transformed to an equivalent $1^{st}$ step problem with $\bar{p} \leftarrow \bar{p} - p^{(1)}$: solving the new $p^{(1)}$ of the transformed problem gives the correct equalizing solution for $p^{(2)}$ of the original problem. In general, we have the following recursive algorithm in determining $L$ and $p^{(1)},\ldots,p^{(L)}$.



\begin{tabular}[l]{@{} l @{~} p{12.5cm} @{}} \label{alg2}
\vspace{-14pt}\\
\multicolumn{2}{@{}l}{\emph{Algorithm 2, A finer translation by adapting $L$ and the powers using rate constraint equalization.}}\\
\vspace{-15pt}\\
\hline
\multicolumn{2}{@{}l}{Initialize $L = 1$.}\\

$\quad$&Step 1: If $\bar{p}\le \frac{1-g_{12}}{g_{12}^2}$, then $p^{(L)} \leftarrow \bar{p}$ and terminate.\\

$\quad$&Step 2: $p^{(L)} \leftarrow 1 - g_{12} +(1-g_{12}^2)\bar{p}. ~~L \leftarrow L + 1. ~~\bar{p} \leftarrow \bar{p} - p^{(1)}.$ Go to Step 1.\\
\hline
\\
\end{tabular}

Numerical evaluations of the above simple and finer translations of the optimal scheme of the deterministic channel into that of the Gaussian channel are provided later in Figure \ref{15_30}.

\subsection{Upper Bounds on the Sum-capacity with Successive Decoding of Gaussian Codewords}
In this subsection, we provide two upper bounds on the optimal solution of \eqref{theprobG} for general (asymmetric) channels. More specifically, the bounds are derived for the sum-capacity with Han-Kobayashi schemes, which automatically upper bound the sum-capacity with successive decoding of Gaussian codewords (as shown in Section \ref{HvSsec}.) We will observe that the two bounds have complementary efficiencies, i.e., each being tight in a different regime of parameters.

Similarly to Section \ref{HvSsec}, we denote by $x_i^p$ the private message of the $i^{th}$ user, and $x_i^c$ the common message ($i=1,2$.) We denote $q_i$ to be the power allocated to each \emph{private} message $x_i^p$, $i=1,2$. Then, the power of the common message $x_i^c$ equals $\bar{p}_i - q_i$. WLOG, we normalize the channel parameters such that $g_{11} = g_{22} = 1$. Denote the rates of $x_i^p$ and $x_i^c$ by $r_i^p$ and $r_i^c$. The sum-capacity of Gaussian Han-Kobayashi schemes is thus the following:
\begin{align}
\max_{q_1,q_2}~ & r_1^c + r_1^p + r_2^c + r_2^p\label{maxHK}\\
s.t.~ &\eqref{H1}\sim\eqref{H7}.\nn
\end{align}

To bound \eqref{maxHK}, we select two mutually exclusive \emph{subsets} of the constraints: $\{\eqref{H1},\eqref{H7}\}$ and $\{\eqref{H4}\}$. Then, with each subset of the constraints, a relaxed sum-rate maximization problems can be solved, leading to an \emph{upper bound} to the original constrained sum-capacity \eqref{maxHK}.

The first upper bound on the constrained sum-capacity is as follows (whose proof is immediate from \eqref{H1} and \eqref{H7}):
\begin{lem}
The sum-capacity using Han-Kobayashi schemes is upper bounded by
\begin{align}
opt_1 \triangleq \max_{q_1,q_2}~ \min\big\{~& \log(1+\frac{\bar{p}_1 + g_{21}(\bar{p}_2-q_2)}{g_{21}q_2 + N_1}) + \log(1 + \frac{q_2}{g_{12}q_1 + N_2}), \nn\\
& \log(1+\frac{\bar{p}_2 + g_{12}(\bar{p}_1-q_1)}{g_{12}q_1 + N_2}) + \log(1 + \frac{q_1}{g_{21}q_2 + N_1})~\big\}. \label{UB1}
\end{align}
\end{lem}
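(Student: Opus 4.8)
The plan is to obtain the bound as a \emph{relaxation}: for each of the two expressions inside the minimum in \eqref{UB1}, I retain exactly two of the Han--Kobayashi constraints and discard the rest, summing the two retained constraints so that their left-hand sides cover the entire sum rate $r_1^c+r_1^p+r_2^c+r_2^p$. The parenthetical remark in the statement is precise: the bound follows immediately from \eqref{H1} and \eqref{H7} alone.

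First I would fix an arbitrary feasible point of \eqref{maxHK}, i.e. a power split $(q_1,q_2)$ together with rates $(r_1^c,r_1^p,r_2^c,r_2^p)$ satisfying \eqref{H1}$\sim$\eqref{H7}. To produce the first argument of the minimum, I add the first (left) inequality in \eqref{H1}, which bounds $r_1^c+r_1^p+r_2^c$, to the second (right) inequality in \eqref{H7}, which bounds $r_2^p$. This gives
\begin{equation*}
r_1^c+r_1^p+r_2^c+r_2^p \le \log\Big(1+\frac{q_1^c+q_1^p+g_{21}q_2^c}{g_{21}q_2^p+N_1}\Big) + \log\Big(1+\frac{q_2^p}{g_{12}q_1^p+N_2}\Big).
\end{equation*}
Symmetrically, adding the second (right) inequality in \eqref{H1}, which bounds $r_2^c+r_2^p+r_1^c$, to the first (left) inequality in \eqref{H7}, which bounds $r_1^p$, yields the same bound with the two users interchanged.

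The next step is a pure change of variables. Using the normalization $g_{11}=g_{22}=1$ and the identifications $q_i^p=q_i$ and $q_i^c=\bar{p}_i-q_i$ (so that $q_i^c+q_i^p=\bar{p}_i$), the two right-hand sides become exactly the two arguments of the minimum in \eqref{UB1}, evaluated at $(q_1,q_2)$. Since both inequalities hold at the \emph{same} feasible point, the sum rate at that point is bounded by the smaller of the two, i.e. by $\min\{\cdot,\cdot\}$ evaluated at $(q_1,q_2)$.

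Finally I would pass to the supremum over power splits. The displayed bound holds for the particular $(q_1,q_2)$ of the chosen feasible point, and $\min\{\cdot,\cdot\}$ at that split is no larger than its maximum over all splits; hence the sum rate at any feasible point is at most $opt_1$, and maximizing the left side over all feasible points proves the claim. The only point needing care---and the closest thing to an obstacle in an otherwise one-line argument---is the ordering of the two maximizations: one must first bound the sum rate \emph{pointwise} in $(q_1,q_2)$ after discarding \eqref{H2}$\sim$\eqref{H6}, so that the subsequent maximization over the power split produces a genuine upper bound on \eqref{maxHK} rather than an unrelated quantity. No actual optimization is carried out; $opt_1$ is simply the value of the relaxed program defined by keeping only the two summed constraints.
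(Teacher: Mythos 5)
Your proposal is correct and follows essentially the same route as the paper: the paper also bounds the full sum rate by adding the left inequality of \eqref{H1} to the right inequality of \eqref{H7} (and symmetrically), takes the minimum of the two resulting bounds pointwise in $(q_1,q_2)$, and only then maximizes over the power split. The only cosmetic difference is that the paper re-derives those two constraints from the multiple-access-channel decoding argument at each receiver, whereas you cite \eqref{H1} and \eqref{H7} directly; the substitution $q_i^p=q_i$, $q_i^c=\bar{p}_i-q_i$ and the ordering of the pointwise bound before the maximization are handled identically.
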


\vspace{3pt}
\begin{proof}[Computation of the upper bound \eqref{UB1}]

Note that
\begin{align}
& \log(1+\frac{\bar{p}_1 + g_{21}(\bar{p}_2-q_2)}{g_{21}q_2 + N_1}) + \log(1 + \frac{q_2}{g_{12}q_1 + N_2}) \nn\\
= & ~ \log(c_1) - \log(g_{21}q_2 + N_1) - \log(g_{12}q_1 + N_2) + \log(g_{12}q_1 + q_2 + N_2), \label{minterma}\\
\text{ and }~ & \log(1+\frac{\bar{p}_2 + g_{12}(\bar{p}_1-q_1)}{g_{12}q_1 + N_2}) + \log(1 + \frac{q_1}{g_{21}q_2 + N_1}) \nn\\
= & ~ \log(c_2) - \log(g_{12}q_1 + N_2) - \log(g_{21}q_2 + N_1) + \log(g_{21}q_2 + q_1 + N_1), \label{mintermb}
\end{align}
where $c_1 \triangleq N_1 + \bar{p}_1 + g_{21}\bar{p}_2, c_2 \triangleq N_2 + \bar{p}_2 + g_{12}\bar{p}_1$. Clearly, the minimum of \eqref{minterma} and \eqref{mintermb}) is
\begin{align}
& - \log(g_{21}q_2 + N_1) - \log(g_{12}q_1 + N_2) \label{minoftwo}\\
& + \log\big( \min\{c_1(g_{12}q_1 + q_2 + N_2),c_2(g_{21}q_2 + q_1 + N_1)\} \big). \nn
\end{align}

Now, consider the halfspace $(q_1,q_2)\in \mc{H}$ defined by the linear constraint
\beq
c_1(g_{12}q_1 + q_2 + N_2) \le c_2(g_{21}q_2 + q_1 + N_1) \Leftrightarrow (c_1g_{12} - c_2)q_1 \le (c_2g_{21} - c_1)q_2 + c_2N_1 - c_1N_2. \label{lincons}
\eeq

In $\mc{H}$,
\beq
\eqref{minoftwo} = \log(c_1) - \log(g_{21}q_2 + N_1) - \log(g_{12}q_1 + N_2) + \log(g_{12}q_1 + q_2 + N_2) \triangleq f(q_1,q_2). \label{halfspace1}
\eeq
Note that $\frac{\partial f(q_1,q_2)}{\partial q_1} < 0, \forall q_1\ge 0$. Thus, depending on the sign of $c_1g_{12} - c_2$, we have the following two cases:

\emph{Case 1:} $c_1g_{12} - c_2 \ge 0$. Then, \eqref{lincons} gives an \emph{upper} bound on $q_1$. Consequently, to maximize \eqref{halfspace1}
, the optimal solution is achieved with $q_1 = 0$. Thus, maximizing \eqref{halfspace1} is equivalent to
\begin{align}
\max_{q_2} &~ - \log(g_{21}q_2 + N_1) + \log(q_2 + N_2) \label{maxq2} \\
\text{s.t. } &~ 0\le q_2\le \bar{p}_2,
\end{align}
in which the objective \eqref{maxq2} is \emph{monotonic}, and the solution is either $q_2 = 0$ or $q_2 = \bar{p}_2$.

\emph{Case 2:} $c_1g_{12} - c_2 < 0$. Then, \eqref{lincons} gives a \emph{lower} bound on $q_1$,
\beq
q_1\ge\frac{(c_1 - c_2g_{21})q_2 + c_1N_2 - c_2N_1}{c_2 - c_1g_{12}}.
\eeq
Consequently, to maximize \eqref{halfspace1}
, the optimal solution is achieved with $q_1 = \frac{(c_1 - c_2g_{21})q_2 + c_1N_2 - c_2N_1}{c_2 - c_1g_{12}}$, which is a linear function of $q_2$. Substituting this into \eqref{halfspace1}, we need to solve the following problem:
\begin{align}
\max_{q_2} &~ -\log(a_1q_2 + b_1) - \log(a_2q_2 + b_2) + \log(a_3q_2 + b_3) \label{maxq2_2}\\
\text{s.t. } &~ 0\le q_2\le \bar{p}_2, \nn
\end{align}
where $a_i,b_i, (i =1,2,3)$ are constants determined by $c_1,c_2,g_{12},g_{21},N_1,N_2$. Now, \eqref{maxq2_2} can be solved by taking the first derivative w.r.t. $q_2$, and checking the two stationary points and the two boundary points.

In the other halfspace $\mc{H}^c$, the same procedure as above can be applied, and the maximizer of \eqref{minoftwo} within $\mc{H}^c$ can be found. Comparing the two maximizers within $\mc{H}$ and $\mc{H}^c$ respectively, we get the global maximizer of \eqref{UB1}.
\end{proof}

The second upper bound on the constrained sum-capacity is as follows (whose proof is immediate from \eqref{H4}):
\begin{lem}
The sum-capacity using Han-Kobayashi schemes is upper bounded by
\begin{align}
opt_2 \triangleq \max_{q_1,q_2}~ \log\left(1+\frac{q_1 + g_{21}(\bar{p}_2 - q_2)}{g_{21}q_2 + N_1}\right) + \log\left(1+\frac{q_2 + g_{12}(\bar{p}_1 - q_1)}{g_{12}q_1 + N_2}\right). \label{UB2}
\end{align}
\end{lem}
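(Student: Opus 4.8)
The plan is to obtain the bound directly from the \emph{single} pair of constraints \eqref{H4}, discarding all the remaining inequalities \eqref{H1}--\eqref{H3} and \eqref{H5}--\eqref{H7} that characterize the Gaussian Han-Kobayashi rate region. Dropping constraints enlarges the feasible set, so the resulting maximum can only increase; this relaxation is precisely what makes $opt_2$ an upper bound on the sum-capacity \eqref{maxHK} rather than its exact value. The entire argument rests on the fact that \eqref{H4} already couples rates from both users in a way that, when the two sub-inequalities are combined, reconstitutes the full sum-rate objective.

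First I would rewrite \eqref{H4} in the notation of this subsection, substituting $q_i^p = q_i$ for the private powers and $q_i^c = \bar{p}_i - q_i$ for the common powers. Under this substitution the two inequalities in \eqref{H4} become
\begin{align}
r_1^p + r_2^c &\le \log\left(1 + \frac{q_1 + g_{21}(\bar{p}_2 - q_2)}{g_{21}q_2 + N_1}\right), \nn \\
r_2^p + r_1^c &\le \log\left(1 + \frac{q_2 + g_{12}(\bar{p}_1 - q_1)}{g_{12}q_1 + N_2}\right). \nn
\end{align}
The decisive step is then simply to \emph{add} these two inequalities: their left-hand sides sum to $r_1^c + r_1^p + r_2^c + r_2^p$, which is exactly the sum-rate objective of \eqref{maxHK}, while their right-hand sides sum to the objective appearing in \eqref{UB2}. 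Consequently, for every rate tuple feasible in the Han-Kobayashi region under any admissible split $(q_1,q_2)$ with $0 \le q_i \le \bar{p}_i$, the achieved sum-rate is bounded above by that right-hand side evaluated at $(q_1,q_2)$. Maximizing the right-hand side over the feasible power region $0 \le q_i \le \bar{p}_i$ then yields $opt_2$, so the sum-capacity of \eqref{maxHK} is at most $opt_2$.

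I expect essentially no analytical obstacle: the proof is a one-line relaxation followed by the summation of two inequalities, which is why the statement can be called immediate. The only point deserving a moment's care is the bookkeeping of the rate terms — one must check that the four summands $r_1^p, r_2^c, r_2^p, r_1^c$ on the left of \eqref{H4} are exactly the four components of the sum-rate, each appearing once. Since \eqref{H4} pairs each user's \emph{private} rate with the \emph{other} user's \emph{common} rate, the two left-hand sides together cover all four rates precisely once, confirming the identification. I would emphasize that this lemma makes no claim about tightness; whether $opt_2$ or $opt_1$ is the sharper bound depends on the channel regime and is settled by the subsequent numerical comparisons rather than within this proof.
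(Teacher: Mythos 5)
Your proposal is correct and takes essentially the same route as the paper: both isolate the single pair of constraints \eqref{H4}, substitute $q_i^p = q_i$ and $q_i^c = \bar{p}_i - q_i$, add the two inequalities so that the left-hand sides reconstitute the full sum-rate $r_1^c + r_1^p + r_2^c + r_2^p$, and maximize the resulting right-hand side over $0 \le q_i \le \bar{p}_i$ to obtain $opt_2$. The paper's proof merely adds a re-derivation of \eqref{H4} from the Gaussian multiple-access constraints at each receiver (decoding $x_i^p$ and $x_j^c$ while treating the other private message as noise), which your relaxation argument correctly takes as already given by the Han-Kobayashi region.
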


\vspace{3pt}
\begin{proof}[Computation of the upper bound \eqref{UB2}]

Note that
\begin{align}
 &~ \log\left(1+\frac{q_1 + g_{21}(\bar{p}_2 - q_2)}{g_{21}q_2 + N_1}\right) + \log\left(1+\frac{q_2 + g_{12}(\bar{p}_1 - q_1)}{g_{12}q_1 + N_2}\right) \nn\\
= &~ \log(q_1 + g_{21}\bar{p}_2 + N_1) - \log(g_{12}q_1 +N_2) \label{aboutq1}\\
&~ + \log(q_2 + g_{12}\bar{p}_1 + N_2) - \log(g_{21}q_2 +N_1) \label{aboutq2},
\end{align}
where \eqref{aboutq1} is a function only of $q_1$, and \eqref{aboutq2} is a function only of $q_2$. 
Clearly, $\max~\eqref{aboutq1},~s.t.~0\le q_1\le \bar{p}_1$ and $\max~\eqref{aboutq2},~s.t.~0\le q_2\le \bar{p}_2$ can each be solved by taking the first order derivatives, and checking the stationary points and the boundary points.
\end{proof}

\vspace{5pt}
We combine the two upper bounds \eqref{UB1} and \eqref{UB2} as the following theorem:

\begin{thm}
The sum-capacity using Gaussian superposition coding-successive decoding is upper bounded by $\min(opt_1, opt_2)$.
\end{thm}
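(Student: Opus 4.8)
The plan is to obtain the theorem as an immediate consequence of the two preceding lemmas together with the containment established in Section \ref{HvSsec}. First I would invoke that containment: every Gaussian superposition coding--successive decoding scheme is a special case of a Gaussian Han-Kobayashi scheme, so its achievable rate region lies inside the Han-Kobayashi region. Hence the sum-capacity with successive decoding is no larger than the optimal value of \eqref{maxHK}, the Han-Kobayashi sum-capacity. It therefore suffices to bound \eqref{maxHK} by $\min(opt_1,opt_2)$.

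Next I would note that each of $opt_1$ and $opt_2$ is a valid upper bound on \eqref{maxHK} because each is obtained by retaining only a subset of the defining inequalities \eqref{H1}--\eqref{H7}, namely $\{\eqref{H1},\eqref{H7}\}$ for $opt_1$ and $\{\eqref{H4}\}$ for $opt_2$. For any fixed power split $(q_1,q_2)$, dropping constraints only enlarges the feasible set of rate tuples, so the maximal sum-rate under the smaller constraint set dominates that under all of \eqref{H1}--\eqref{H7}; since all three optimizations maximize over the same box $0\le q_i\le\bar{p}_i$, this domination survives the outer maximization over $(q_1,q_2)$. Concretely, adding the left inequality of \eqref{H1} to the right inequality of \eqref{H7} (and symmetrically) produces the two terms whose minimum defines $opt_1$, while adding the two inequalities in \eqref{H4} produces $opt_2$; these are exactly the two lemmas, which I may cite directly.

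Finally, since \eqref{maxHK} is simultaneously bounded by $opt_1$ and by $opt_2$, it is bounded by $\min(opt_1,opt_2)$; chaining this with the containment inequality from the first step yields the claim. There is essentially no obstacle here, as the result is a corollary of what precedes it; the only point meriting care is the one already flagged, namely verifying that the per-$(q_1,q_2)$ relaxation inequalities are preserved under the common outer maximization. This holds because the power feasible set is identical across \eqref{maxHK}, $opt_1$, and $opt_2$, so the inequality chain is valid termwise in $(q_1,q_2)$ and hence after taking suprema.
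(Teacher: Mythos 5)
Your proposal is correct and takes essentially the same route as the paper: the containment of successive-decoding schemes in Han-Kobayashi schemes from Section \ref{HvSsec} reduces the claim to bounding \eqref{maxHK}, and each of $opt_1$ and $opt_2$ bounds \eqref{maxHK} because it arises from the relaxed constraint subsets $\{\eqref{H1},\eqref{H7}\}$ and $\{\eqref{H4}\}$ respectively, which is exactly the content of the paper's two lemmas. The only (minor) difference is that you make explicit the point the paper leaves implicit, namely that the per-$(q_1,q_2)$ relaxation inequality is preserved under the common outer maximization over $0\le q_i\le\bar{p}_i$.
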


\begin{figure}[tb]
  \centering
  \includegraphics[scale = 0.7]{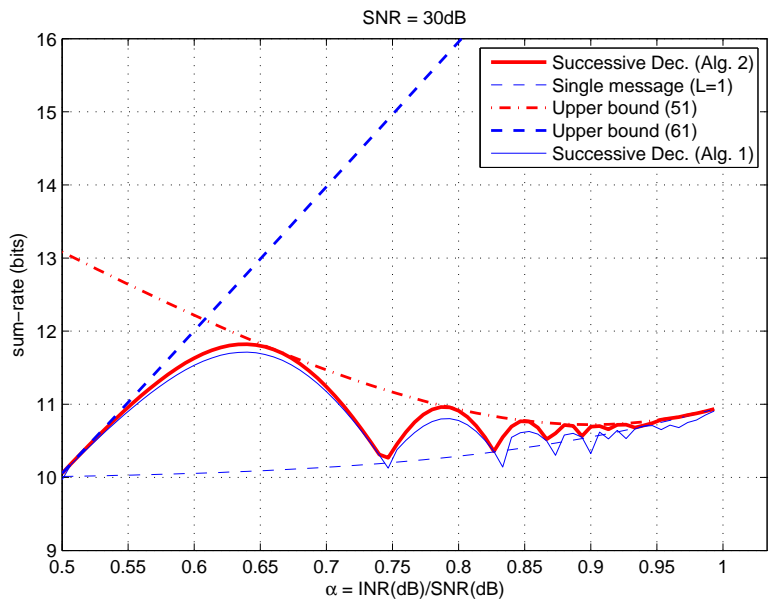}
  \caption{Performance evaluation: achievability vs. upper bounds.}
  \label{15_30}
\end{figure}

\subsection{Performance Evaluation} \label{perf}
We numerically evaluate our results in a symmetric Gaussian interference channel. The $\ms{SNR}$ is set to be $30dB$. To evaluate the performance of successive decoding, we sweep the parameter range of $\alpha = \frac{\log(\ms{INR})}{\log(\ms{SNR})}\in[0.5,1]$, as when $\alpha\in[0,0.5]$, the approximate optimal transmission scheme is simply treating interference as noise without successive decoding.

In Figure \ref{15_30}, the simple translation by Algorithm 1 and the finer translation by Algorithm 2 are evaluated, and the two upper bounds derived above \eqref{UB1}, \eqref{UB2} are computed. The maximum achievable sum-rate with a single message for each user ($L_1=L_2=1$) is also computed, and is used as a baseline scheme for comparison.

We make the following observations:
\begin{itemize}
\item The finer translation of the optimal deterministic scheme by Algorithm 2 is strictly better than the simple translation by Algorithm 1, and is also strictly better than the optimal single message scheme.
\item The first upper bound \eqref{UB1} is tighter for higher $\ms{INR}$ ($\alpha\ge 0.608$ in this example), while the second upper bound \eqref{UB2} is tighter for lower $\ms{INR}$ ($\alpha < 0.608$ in this example).
\item A phenomenon similar to that in the deterministic channels appears: the sum-capacity with successive decoding of Gaussian codewords oscillates between the sum-capacity with Han-Kobayashi schemes and that with single message schemes.
\item The largest difference between the sum-capacity of successive decoding and that of single message schemes appears at around $\frac{\log(\ms{INR})}{\log(\ms{SNR})} = 0.64$, which is about 1.8 bits.
\item The largest difference between the sum-capacity of successive decoding and that of joint decoding (Han-Kobayashi schemes) appears at around $\frac{\log(\ms{INR})}{\log(\ms{SNR})} = 0.74$. This corresponds to the same parameter setting as discussed in Section \ref{HvSsec} (cf. Figure \ref{HKvsSC}). We see that with 30dB $\ms{SNR}$, this largest sum-capacity difference is about 1.0 bits.
\end{itemize}

\begin{figure}[tb]
  \centering
  \includegraphics[scale = 0.7]{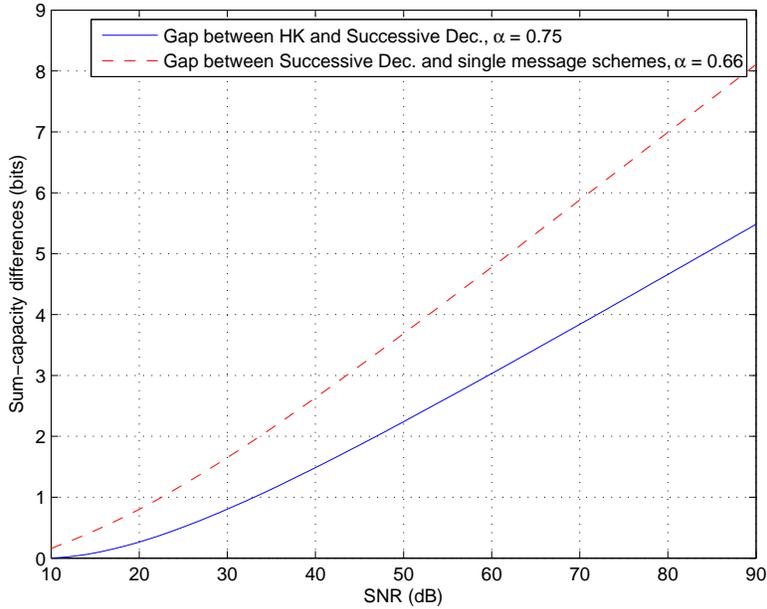}
  \caption{Sum-capacity differences: Han-Kobayashi vs. successive decoding at $\alpha=0.75$, and successive decoding vs. the optimal single message scheme at $\alpha=0.66$.}
  \label{gaps}
\end{figure}

For this particular case with $\ms{SNR} = 30dB$, the observed sum-capacity differences (1.8 bits and 1.0 bits) may not seem very large. However, the capacity curves shown with the deterministic channel model (cf. Figure \ref{siccapafig}) indicate that these differences can go to infinity as $\ms{SNR}\rightarrow\infty$. This is because a rate point $d_{sym}(\alpha)$ on the symmetric capacity curve in the deterministic channel has the following interpretation of \emph{generalized degrees of freedom} in the Gaussian channel \cite{ETW08,BT08}.
\beq
d_{sym}(\alpha) = \lim_{\ms{SNR,INR}\rightarrow\infty, \frac{\log\ms{INR}}{\log\ms{SNR}}=\alpha} \frac{C_{sym}(\ms{INR},\ms{SNR})}{C_{awgn}(\ms{SNR})},
\eeq
where $C_{awgn}(\ms{SNR}) = \log(1+\ms{SNR})$, and $C_{sym}(\ms{INR},\ms{SNR})$ is the symmetric capacity in the two-user symmetric Gaussian channel as a function of $\ms{INR}$ and $\ms{SNR}$.

Since $C_{awgn}(\ms{SNR})\rightarrow\infty$ as $\ms{SNR}\rightarrow\infty$, for a fixed $\alpha$, any finite gap of the achievable rates in the deterministic channel indicates a rate gap that goes to infinity as $\ms{SNR}\rightarrow\infty$ in the Gaussian channel. To illustrate this, we plot the following sum-capacity differences in the Gaussian channel, with $\ms{SNR}$ growing from $10dB$ to $90dB$:
\begin{itemize}
\item The sum-capacity gap between Gaussian superposition coding - successive decoding schemes and single message schemes, with $\alpha = \frac{\log(\ms{INR})}{\log(\ms{SNR})} = 0.66$.
\item The sum-capacity gap between Han-Kobayashi schemes and Gaussian superposition coding - successive decoding schemes, with $\alpha = \frac{\log(\ms{INR})}{\log(\ms{SNR})} = 0.75$.
\end{itemize}

As observed, the sum-capacity gaps increase asymptotically linearly with $\log\ms{SNR}$, and will go to infinity as $\ms{SNR}\rightarrow\infty$.

\section{Concluding Remarks and Discussion} \label{disc}
In this paper, we studied the problem of sum-rate maximization with Gaussian superposition coding and successive decoding in two-user interference channels. This is a hard problem that involves both a combinatorial optimization of decoding orders and a non-convex optimization of power allocation. To approach this problem, we used the deterministic channel model as an educated approximation of the Gaussian channel model, and introduced the complementarity conditions that capture the use of successive decoding of Gaussian codewords. We solved the sum-capacity of the deterministic interference channel with the complementarity conditions, and obtained the capacity achieving schemes with the minimum number of messages. We showed that the constrained sum-capacity oscillates as a function of the cross link gain parameters between the information theoretic sum-capacity and the sum-capacity with interference treated as noise. Furthermore, we showed that if the number of messages used by either of the two users is fewer than its minimum capacity achieving number, the maximum achievable sum-rate drops to that with interference treated as noise. Next, we translated the optimal schemes in the deterministic channel to the Gaussian channel using a rate constraint equalization technique, and provided two upper bounds on the sum-capacity with Gaussian superposition coding and successive decoding. Numerical evaluations of the translation and the upper bounds showed that the constrained sum-capacity oscillates between the sum-capacity with Han-Kobayashi schemes and that with single message schemes.

Next, we discuss some intuitions and generalizations of the coding-decoding assumptions. 
\subsection{Complementarity Conditions and Gaussian Codewords}
The complementarity conditions \eqref{cc10}, \eqref{cc20} in the deterministic channel model has played a central role that leads to the discovered oscillating constrained sum-capacity (cf. Theorem \ref{siccapa}). The intuition behind the complementarity conditions is as follows: At any receiver, if two active levels from different users interfere with each other, then \emph{no} information can be recovered at this level. In other words, the \emph{sum of interfering codewords provides nothing helpful}.

This is exactly the case when random Gaussian codewords are used in Gaussian channels with successive decoding, because the sum of two codewords from random Gaussian codebooks cannot be decoded as a valid codeword. This is the reason why the usage of Gaussian codewords with successive decoding is translated to complementarity conditions in the deterministic channels. (Note that the preceding discussions do not apply to \emph{joint} decoding of Gaussian codewords as in Han-Kobayashi schemes.)
\subsection{Modulo-2 Additions and Lattice Codes}
In the deterministic channel, a relaxation on the complementarity conditions is that the \emph{sum} of two interfering active levels can be decoded as their \emph{modulo-2 sum}. As a result, the aggregate of two interfering codewords still provides something valuable that can be exploited to achieve higher capacity. This assumption is part of the original formulation of the deterministic channel model \cite{ADTjour}, with which the information theoretic capacity of the two-user interference channel (cf. Figure \ref{siccapafig} for the symmetric case) can be achieved with Han-Kobayashi schemes \cite{BT08}.

In Gaussian channels, to achieve an effect similar to decoding the modulo-2 sum with successive decoding, \emph{Lattice} codes are natural candidates of the coding schemes. This is because Lattice codebooks have the group property such that the \emph{sum} of two lattice codewords can still be decoded as a valid codeword. Such intermediate information can be decoded first and exploited later during a successive decoding procedure, in order to increase the achievable rate. For this to succeed in interference channels, \emph{alignment of the signal scales} becomes essential \cite{MDFT}. However, our preliminary results have shown that the ability to decode the sum of the Lattice codewords does \emph{not} provide sum-capacity increase for low and medium $\ms{SNR}$s. In the above setting of $\ms{SNR}=30dB$ (which is typically considered as a high $\ms{SNR}$ in practice,) numerical computations show that the sum-capacity using successive decoding of lattice codewords with alignment of signal scales is \emph{lower} than the previously shown achievable sum-rate using successive decoding of Gaussian codewords (cf. Figure \ref{15_30}), for the entire range of $\alpha=\frac{\log{\ms{INR}}}{\log{\ms{SNR}}}\in[0.5,1]$. The reason is that the cost of alignment of the signal scales turns out to be higher than the benefit from it, if $\ms{SNR}$ is not sufficiently high. In summary, no matter using Gaussian codewords or Lattice codewords, the gap between the achievable rate using successive decoding and that using joint decoding can be significant for typical $\ms{SNR}$s in practice. 


\appendix

\subsection{Proof of Lemma \ref{noweak} and \ref{ifnostrong}} \label{prflemlim}
\begin{proof}[Proof of Lemma \ref{noweak}]
By symmetry, it is sufficient to prove for the case $f_2(x) = 1,\forall x\in s_{2i}$, for some $s_{2i}$ that does not end at 1.

Now, consider the sum-rate achieved within $C_1$ \eqref{c1c2}.
As shown in Figure \ref{lemL1}, $C_1$ can be partitioned into three parts: $C_{11} = \{f_1(x)|_{s_1,s_3,\ldots,s_{2i-3}}, f_2(x)|_{s_2,s_4,\ldots,s_{2i-2}}\}$, $C_{12} = \{f_1(x)|_{s_{2i-1},s_{2i+1}}, f_2(x)|_{s_{2i}}\}$, and
$C_{13} = $ $\{f_1(x)|_{s_{2i+3},\ldots}, f_2(x)|_{s_{2i+2},\ldots}\}$, ($C_{11}, C_{12}, C_{13}$ can be degenerate.) 
Note that
\begin{itemize}
\item From the achievable schemes in the proof of Theorem \ref{siccapa}, the maximum achievable sum-rate within $C_{11} \cup C_{13}$ can be achieved with $f_2(x) = 1, \forall x\in s_2\cup s_4 \cup \ldots \cup s_{2i-2} \cup s_{2i+2} \cup \ldots$, and $f_1(x) = 0, \forall x\in s_1\cup s_3 \cup \ldots \cup s_{2i-3} \cup s_{2i+3} \cup\ldots$.
\item By the assumed condition, $f_2(x) = 1,\forall x\in s_{2i} \Rightarrow f_1(x) = 0,\forall x\in s_{2i-1}\cup s_{2i+1}$.
\end{itemize}
Therefore, under the assumed condition, the maximum achievable sum-rate within $C_1$ is achievable with $\{f_2(x) = 1, \forall x\in \mc{G}_2$, and $f_1(x) = 0, \forall x\in \mc{G}_1\}$.

\begin{figure}[t]
  \centering
  \includegraphics[scale = 0.6]{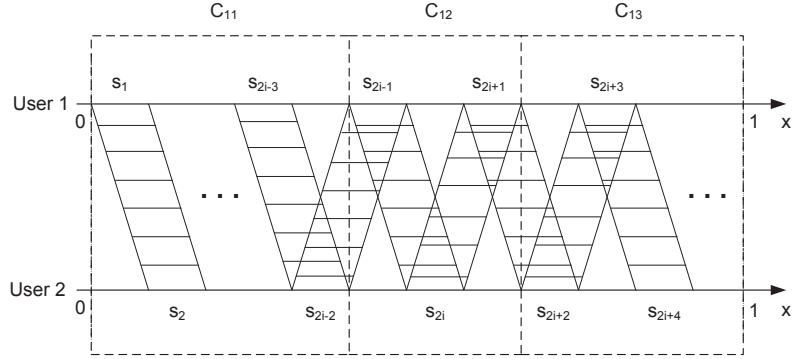}
  \caption{$C_1$ partitioned into three parts for Lemma \ref{noweak}.}
  \label{lemL1}
\end{figure}

Furthermore, from the proof of Theorem \ref{siccapa}, we know that the maximum achievable sum-rate within $C_2$ is achievable with $\{f_2(x) = 1, \forall x\in \mc{G}_1$, and $f_1(x) = 0, \forall x\in \mc{G}_2\}$. Combining the maximum achievable schemes within $C_1$ and $C_2$, by letting $\{f_2(x) = 1, \forall x\in[0,1]$, and $f_1(x) = 0, \forall x\in[0,1]\}$, a sum-rate of 1 is achieved, and this is the maximum achievable sum-rate given the assumed condition.
\end{proof}

\begin{proof}[Proof of Lemma \ref{ifnostrong}]
By symmetry, it is sufficient to prove for the case $f_1(x) = 0,\forall x\in s_{2i-1}$, for some $s_{2i-1}$.

Now, consider the sum-rate achieved within $C_1$.
As shown in Figure \ref{lemL2}, $C_1$ can be partitioned into three parts: $C_{11} = \{f_1(x)|_{s_1,s_3,\ldots,s_{2i-3}}, f_2(x)|_{s_2,s_4,\ldots,s_{2i-2}}\}$, $C_{12} = f_1(x)|_{s_{2i-1}}$, and
$C_{13} = $ $\{f_1(x)|_{s_{2i+1},s_{2i+3},\ldots}, f_2(x)|_{s_{2i},s_{2i+2},\ldots}\}$, ($C_{11}, C_{12}, C_{13}$ can be degenerate.) Note that:
\begin{itemize}
\item From the achievable schemes in the proof of Theorem \ref{siccapa}, the maximum achievable sum-rate within $C_{11} \cup C_{13}$ can be achieved with $f_2(x) = 1, \forall x\in s_2\cup s_4 \cup \ldots \cup s_{2i-2} \cup s_{2i} \cup \ldots$, and $f_1(x) = 0, \forall x\in s_1\cup s_3 \cup \ldots \cup s_{2i-3} \cup s_{2i+1} \cup\ldots$.
\item By the assumed condition, $f_1(x) = 0,\forall x\in s_{2i-1}$.
\end{itemize}
Therefore, under the assumed condition, the maximum achievable sum-rate within $C_1$ is achievable with $\{f_2(x) = 1, \forall x\in \mc{G}_2$, and $f_1(x) = 0, \forall x\in \mc{G}_1\}$.

\begin{figure}[t]
  \centering
  \includegraphics[scale = 0.6]{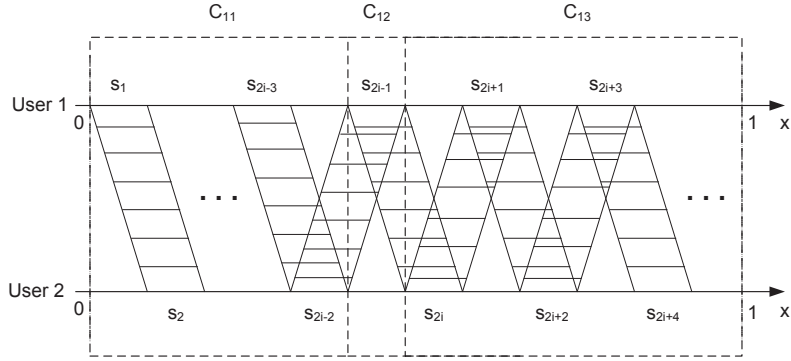}
  \caption{$C_1$ partitioned into three parts for Lemma \ref{ifnostrong}.}
  \label{lemL2}
\end{figure}

Furthermore, from the proof of Theorem \ref{siccapa}, we know that the maximum achievable sum-rate within $C_2$ is achievable with $\{f_2(x) = 1, \forall x\in \mc{G}_1$, and $f_1(x) = 0, \forall x\in \mc{G}_2\}$. Combining the maximum achievable schemes within $C_1$ and $C_2$, by letting $\{f_2(x) = 1, \forall x\in[0,1]$, and $f_1(x) = 0, \forall x\in[0,1]\}$, a sum-rate 1 is achieved, and this is the maximum achievable sum-rate given the assumed condition.
\end{proof}

\subsection{Sum-capacity of Deterministic Asymmetric Interference Channels} \label{asymDC}
In this section, we consider the general two-user interference channel where the parameters $n_{11}, n_{22}, n_{12}, n_{21}$ can be arbitrary. Still, WLOG, we make the assumptions that $n_{11}\ge n_{22}$ and $n_{11} = 1$. We will see that our approaches in the symmetric channel can be similarly extended to solving the constrained sum-capacity in asymmetric channels, without and with constraints on the number of messages.

From Lemma \ref{equivcc}, it is sufficient to consider the following three cases:
\beq
\text{ i) } \delta_1\ge 0 \text{ and } \delta_2\ge 0; ~\text{ ii) } \delta_1\ge 0 \text{ and } \delta_2 < 0; ~\text{ iii) } \delta_1 < 0 \text{ and } \delta_2 \ge 0. \label{asym3case}
\eeq

\vspace{5pt}
\subsubsection{Sum-Capacity without Constraint on the Number of Messages}~ \label{asymwo}
\vspace{5pt}

We provide the optimal scheme that achieves the constrained sum-capacity in each of the three cases in \eqref{asym3case}, respectively.

\vspace{5pt}
\paragraph{$\delta_1\ge 0$ and $\delta_2\ge 0$}~ \label{typcase}
\vspace{5pt}

This is by definition \eqref{deltadef} equivalent to $n_{21} \le 1$ and $n_{22} \ge n_{12}$.

\vspace{5pt}
\emph{Case 1}, $n_{22} \ge n_{21}$:

Define $\beta_1 \triangleq 1 - n_{12}, \beta_2 \triangleq n_{22} - n_{21}$. As depicted in Figure \ref{case11}, interval $I_1 (= [0, 1])$ is partitioned into segments $\{s_1,s_2,s_3,\ldots\}$, with $|s_1| = |s_3| = \ldots = \beta_1$ and $|s_2| = |s_4| = \ldots = \beta_2$; the last segment ending at 1 has the length of the proper residual. Interval $I_2 (= [1 - n_{22}, 1])$ is partitioned into segments $\{s_1',s_2',s_3'\ldots\}$, with $|s_1'| = |s_3'| = \ldots = \beta_2$ and $|s_2'| = |s_4'| = \ldots = \beta_1$; the last segment ending at 1 has the length of the proper residual.

Similarly to \eqref{c1c2} as in the previous analysis for the symmetric channels, we partition the optimization variables $f_1(x)|_{[0,1]}$ and $f_2(x)|_{[1 - n_{22}, 1]}$ into
\beq
C_1\triangleq\{f_1(x)|_{s_1,s_3,\ldots}, f_2(x)|_{s_2',s_4',\ldots}\} \text{ and } C_2\triangleq\{f_1(x)|_{s_2,s_4,\ldots}, f_2(x)|_{s_1',s_3',\ldots}\}. \label{ac1c2}
\eeq

\begin{figure}[h!]
  \centering
  \includegraphics[scale = 0.6]{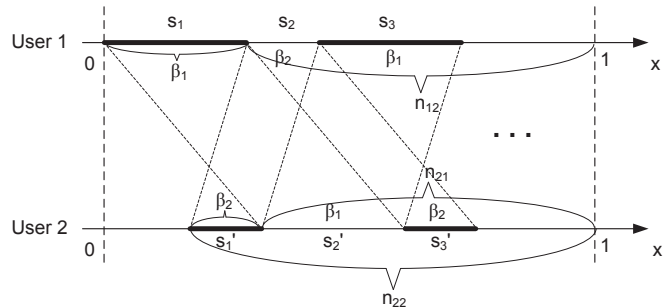}
  \caption{$n_{11} \ge n_{21}$, $n_{22} \ge n_{12}$, and $n_{22} \ge n_{21}$.}
  \label{case11}
\end{figure}

As there is \emph{no constraint between $C_1$ and $C_2$} from the complementarity conditions \eqref{cc10} and \eqref{cc20}, similarly to \eqref{prob1} and \eqref{prob2}, the sum-rate maximization \eqref{theprob0} is decomposed into two separate problems:
\begin{align}
\max_{f_1(x)|_{s_1,s_3,\ldots}, f_2(x)|_{s_2',s_4',\ldots}} & (R_{C_1}^{sum}=)\int_{s_1,s_3,\ldots} f_1(x) \mr{d}x + \int_{s_2',s_4',\ldots} f_2(x) \mr{d}x \label{prob1a} \\
\text{ subject to } & \eqref{indic}, \eqref{cc10}, \eqref{cc20}, \nn
\end{align}
\begin{align}
\max_{f_1(x)|_{s_2,s_4,\ldots}, f_2(x)|_{s_1',s_3',\ldots}} & (R_{C_2}^{sum}=)\int_{s_2,s_4,\ldots} f_1(x) \mr{d}x + \int_{s_1',s_3',\ldots} f_2(x) \mr{d}x \label{prob2a} \\
\text{ subject to } & \eqref{indic}, \eqref{cc10}, \eqref{cc20}. \nn
\end{align}

By the same argument as in the proof of Theorem \ref{siccapa}, the optimal solution of \eqref{prob1a} is given by
\begin{equation}
f_1(x) = 1, \forall x\in s_1\cup s_3\cup\ldots \text{, and } f_2(x) = 0, \forall x\in s_2'\cup s_4'\cup\ldots. \label{achievea1}
\end{equation}
Also, the optimal solution of \eqref{prob2a} is given by
\begin{equation}
f_1(x) = 0, \forall x\in s_2\cup s_4\cup\ldots \text{, and } f_2(x) = 1, \forall x\in s_1'\cup s_3'\cup\ldots. \label{achievea2}
\end{equation}

Consequently, we have the following theorem:
\begin{thm}
A constrained sum-capacity achieving scheme is given by
\begin{equation} \label{achievea}
f_1(x) = \left\{
\begin{array}{rl}
1, & \forall x\in s_1\cup s_3\cup\ldots \\
0, & otherwise \\
\end{array}
\right. , \text{ and }
f_2(x) = \left\{
\begin{array}{rl}
1, & \forall x\in s_1'\cup s_3'\cup\ldots \\
0, & otherwise \\
\end{array}
\right.,
\end{equation}
and the maximum achievable sum-rate is readily computable based on \eqref{achievea}.
\end{thm}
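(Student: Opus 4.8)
The plan is to reproduce, mutatis mutandis, the decomposition argument from the proof of Theorem~\ref{siccapa}. First I would establish the structural claim on which the two decoupled problems \eqref{prob1a}, \eqref{prob2a} rest: that the complementarity conditions \eqref{cc10}, \eqref{cc20} never link a variable of $C_1$ to a variable of $C_2$. This is a finite geometric computation using the chosen segment lengths $\beta_1 = 1 - n_{12}$ and $\beta_2 = n_{22} - n_{21}$. Concretely, I would check that the shift $\delta_1$ carries each odd segment $s_{2i-1}\subset I_1$ exactly onto the even segment $s_{2i}'\subset I_2$, and that the shift $\delta_2$ carries $s_{2i}'$ exactly onto $s_{2i+1}$. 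Using $n_{22}-n_{21}=\beta_2$ and $1-n_{12}=\beta_1$, one finds for example that the left endpoint $1-n_{22}+\beta_2$ of $s_2'$ equals $1-n_{21}=\delta_1$, so that $s_1+\delta_1=s_2'$; the analogous identities hold throughout. Consequently the active levels of $C_1$ form a single alternating conflict chain $s_1\to s_2'\to s_3\to\cdots$ lying entirely inside $C_1$, and, symmetrically, those of $C_2$ form the chain $s_1'\to s_2\to s_3'\to\cdots$ inside $C_2$.

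The payoff of this bookkeeping is that each subproblem becomes \emph{isomorphic} to the symmetric problem already solved. Every $f_1$-segment and every $f_2$-segment appearing in $C_1$ has the common length $\beta_1$, so \eqref{prob1a} is exactly the problem of Theorem~\ref{siccapa} with $\beta\leftarrow\beta_1$; likewise all segments in $C_2$ have length $\beta_2$, so \eqref{prob2a} is the same problem with $\beta\leftarrow\beta_2$. I would therefore run the two-part argument of Theorem~\ref{siccapa} on each chain. For achievability, the scheme \eqref{achievea1} (resp. \eqref{achievea2}) turns on every other segment of the chain, so each conflict constraint has at most one active endpoint and \eqref{cc10}, \eqref{cc20} are satisfied; its value is the total activated length. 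For the converse, I would pair each active $f_1$-segment with the adjacent $f_2$-segment of its chain and integrate the pointwise inequality $f_1(x)+f_2(x+\delta_1)\le 1$ (resp. with $\delta_2$) over the segment, bounding the sum of the two integrals by the common segment length; summing over the pairs yields the matching upper bound. The terminal residual segment ending at $1$ is folded into its chain and bounded by its own (residual) length, exactly as the last segment is treated in Theorem~\ref{siccapa}.

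Combining the two results, since there is no coupling between $C_1$ and $C_2$ the optimal value of \eqref{theprob0} equals ${R_{C_1}^{sum}}^{*}+{R_{C_2}^{sum}}^{*}$, and both optima are attained simultaneously by \eqref{achievea1} and \eqref{achievea2}, i.e. by \eqref{achievea}. The maximum sum-rate is then ``readily computable'' as the sum of activated lengths: $\beta_1$ times the number of odd segments of $I_1$ plus $\beta_2$ times the number of odd segments of $I_2$, with the two terminal residual segments contributing their actual lengths.

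The genuinely non-routine part is the first paragraph: verifying that, with \emph{two distinct} segment lengths $\beta_1\neq\beta_2$, \emph{both} shifts $\delta_1$ and $\delta_2$ respect the $C_1/C_2$ partition and map segment-to-segment, so that each chain is internally ``symmetric.'' Once that alignment is confirmed the achievability and converse are verbatim transcriptions of the symmetric proof. The only secondary care point is the terminal residual segment, whose length is in general neither $\beta_1$ nor $\beta_2$ and which must be attached to the correct chain before its converse bound is applied.
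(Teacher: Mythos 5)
Your proposal is correct and takes essentially the same route as the paper's proof: the identical $C_1/C_2$ partition \eqref{ac1c2}, the identical decoupling into the two independent subproblems \eqref{prob1a} and \eqref{prob2a}, and the identical resolution of each subproblem (achievability by activating alternate segments, converse by pairing each segment with its shifted conflict partner) as in Theorem \ref{siccapa}. The only difference is one of explicitness: you spell out the segment-alignment identities (e.g.\ $s_1+\delta_1=s_2'$ and $s_2'+\delta_2=s_3$, so each conflict chain stays inside one of $C_1$, $C_2$ and has constant segment length $\beta_1$ or $\beta_2$) and the treatment of the terminal residual segment, which the paper compresses into the phrase ``by the same argument as in the proof of Theorem \ref{siccapa}.''
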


\vspace{5pt}
\emph{Case 2}, $n_{21} > n_{22}$:

Define $\beta_1 \triangleq 1 - n_{12} - (n_{21} - n_{22})$. As depicted in Figure \ref{case12}, interval $I_1 (= [0, 1])$ is partitioned into segments $\{s_0,s_1,s_3,s_5,\ldots\}$, with $|s_0| = n_{21} - n_{22}$, and $|s_1| = |s_3| = \ldots = \beta_1$; the last segment ending at 1 has the length of the proper residual. Interval $I_2 (= [1 - n_{22}, 1])$ is partitioned into segments $\{s_2',s_4'\ldots\}$, with $|s_2'| = |s_4'| = \ldots = \beta_1$; the last segment ending at 1 has the length of the proper residual. (The indexing is not consecutive as we consider $\{s_{2i}\}$ and $\{s_{2i-1}'\}$ ($i\ge1$) as degenerating to empty sets.)

\begin{figure}[h!]
  \centering
  \includegraphics[scale = 0.6]{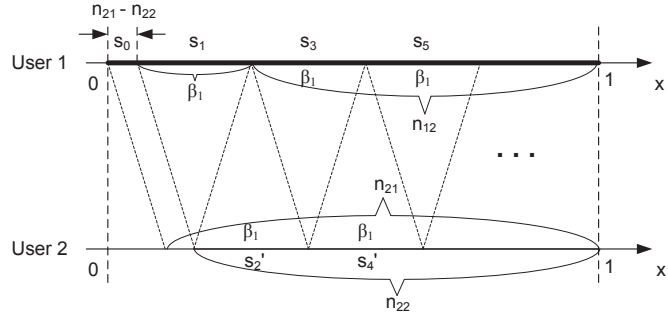}
  \caption{$n_{11} \ge n_{21}$, $n_{22} \ge n_{12}$, and $n_{21} > n_{22}$.}
  \label{case12}
\end{figure}

Clearly, $s_0$ of $I_1$ does not conflict with any levels of $I_2$, and thus we let $f_1(x) = 1, \forall x\in s_0$. On all the other segments, the sum-rate maximization problem is
\begin{align}
\max_{f_1(x)|_{s_1,s_3,\ldots}, f_2(x)|_{s_2',s_4',\ldots}} & \int_{s_1,s_3,\ldots} f_1(x) \mr{d}x + \int_{s_2',s_4',\ldots} f_2(x) \mr{d}x \label{probc12} \\
\text{ subject to } & \eqref{indic}, \eqref{cc10}, \eqref{cc20}. \nn
\end{align}
By the same argument as in the proof of Theorem \ref{siccapa}, the optimal solution of \eqref{probc12} is given by
\begin{equation*}
f_1(x) = 1, \forall x\in s_1\cup s_3\cup\ldots \text{, and } f_2(x) = 0, \forall x\in s_2'\cup s_4'\cup\ldots.
\end{equation*}
Thus, a sum-capacity achieving scheme is simply $f_1(x) = 1, \forall x\in I_1 \text{, and } f_2(x) = 0, \forall x\in I_2$.

\vspace{5pt}
\paragraph{$\delta_1\ge 0$ and $\delta_2 < 0$}~
\vspace{5pt}

This is by definition \eqref{deltadef} equivalent to $n_{21} \le 1$ and $n_{22} < n_{12}$. Note that by Lemma \ref{equivcc}, it is sufficient to only consider the case where $|\delta_1| \ge |\delta_2|$, (because in case $|\delta_1| < |\delta_2|$, we have $|-\delta_2| > |-\delta_1|$.)

\vspace{5pt}
\emph{Case 1}, $n_{22} \ge n_{21}$, and $n_{12} > 1$:

Define $\beta_1 \triangleq n_{22} - n_{21} - (n_{12} -1)$. As depicted in Figure \ref{case21}, interval $I_1 (= [0, 1])$ is partitioned into segments $\{s_1,s_3,\ldots\}$, with $|s_1| = |s_3| = \ldots = \beta_1$; the last segment ending at 1 has the length of the proper residual. Interval $I_2 (= [1 - n_{22}, 1])$ is partitioned into segments $\{s_0',s_2',s_4'\ldots\}$, with $|s_0'| = n_{12} -1$ and $|s_2'| = |s_4'| = \ldots = \beta_1$; the last segment ending at 1 has the length of the proper residual.

\begin{figure}[h!]
  \centering
  \includegraphics[scale = 0.6]{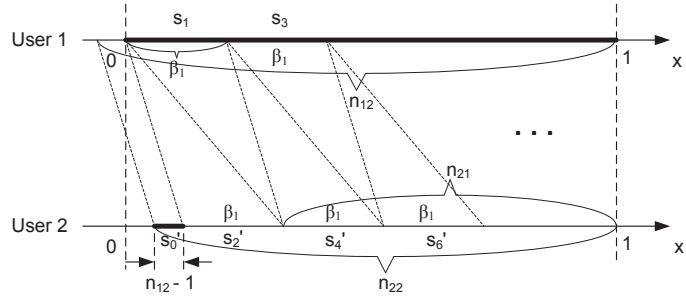}
  \caption{$n_{11} \ge n_{21}$, $n_{22} < n_{12}$, $n_{22} \ge n_{21}$, and $n_{12} > n_{11}$.}
  \label{case21}
\end{figure}

Clearly, $s_0'$ of $I_2$ does not conflict with any levels of $I_1$, and thus we let $f_2(x) = 1, \forall x\in s_0'$. On all the other segments, the sum-rate maximization problem is again \eqref{probc12}, and the optimal solution is given by
\begin{equation*}
f_1(x) = 1, \forall x\in s_1\cup s_3\cup\ldots \text{, and } f_2(x) = 0, \forall x\in s_2'\cup s_4'\cup\ldots.
\end{equation*}
Thus, a sum-capacity achieving scheme is $f_1(x) = 1, \forall x\in I_1 \text{, and } f_2(x) = \left\{
\begin{array}{rl}
1, & \forall x\in s_0'\\
0, & otherwise \\
\end{array}
\right.$.

\vspace{5pt}
\emph{Case 2}, $n_{22} \ge n_{21}$, and $n_{12} \le 1$:

Define $\beta_1 \triangleq 1 - n_{12}, \beta_2 \triangleq n_{22} - n_{21}$. As depicted in Figure \ref{case221}, interval $I_1 (= [0, 1])$ is partitioned into segments $\{s_1,s_2,s_3,\ldots\}$, with $|s_1| = |s_3| = \ldots = \beta_1$ and $|s_2| = |s_4| = \ldots = \beta_2$; the last segment ending at 1 has the length of the proper residual. Interval $I_2 (= [1 - n_{22}, 1])$ is partitioned into segments $\{s_1',s_2',s_3'\ldots\}$, with $|s_1'| = |s_3'| = \ldots = \beta_2$ and $|s_2'| = |s_4'| = \ldots = \beta_1$; the last segment ending at 1 has the length of the proper residual.

\begin{figure}[h!]
  \centering
  \includegraphics[scale = 0.6]{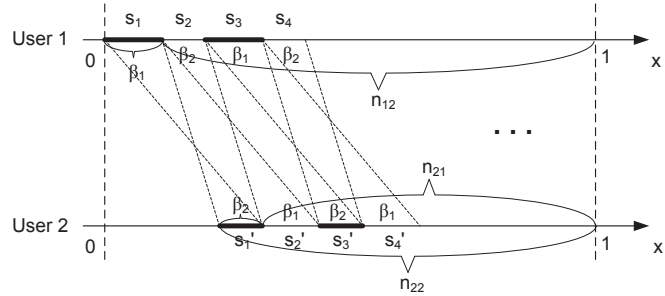}
  \caption{$n_{11} \ge n_{21}$, $n_{22} < n_{12}$, $n_{22} \ge n_{21}$, and $n_{12} \le n_{11}$, scheme I (non-optimal).}
  \label{case221}
\end{figure}

Compare with Case 1 of Section \ref{typcase} and note the similarities between Figure \ref{case221} and Figure \ref{case11}: we apply the same partition of the optimization variables \eqref{ac1c2}, and the sum-rate maximization \eqref{theprob0} is decomposed in the same way into two separate problems \eqref{prob1a} and \eqref{prob2a}. However, while the optimal solution of \eqref{prob1a} is still given by \eqref{achievea1}, \emph{the optimal solution of \eqref{prob2a} is no longer given by \eqref{achievea2}}. Instead, as $\delta_2<0$, the optimal solution of \eqref{prob2a} is given by
\begin{equation*}
f_1(x) = 1, \forall x\in s_2\cup s_4\cup\ldots \text{, and } f_2(x) = 0, \forall x\in s_1'\cup s_3'\cup\ldots.
\end{equation*}
Thus, a sum-capacity achieving scheme is given by $f_1(x) = 1, \forall x\in I_1 \text{, and } f_2(x) = 0, \forall x\in I_2$, depicted as in Figure \ref{case222}.

\begin{figure}[h!]
  \centering
  \includegraphics[scale = 0.6]{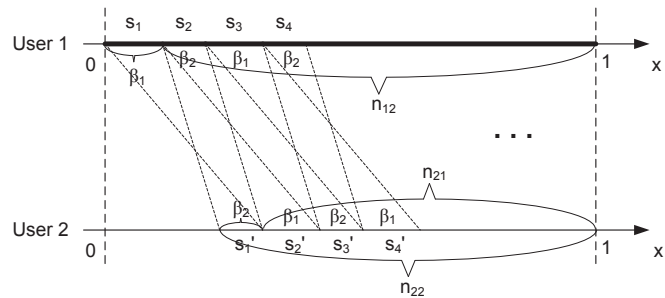}
  \caption{$n_{11} \ge n_{21}$, $n_{22} < n_{12}$, $n_{22} \ge n_{21}$, and $n_{12} \le n_{11}$, scheme II (optimal).}
  \label{case222}
\end{figure}

\vspace{5pt}
\emph{Case 3}, $n_{22} < n_{21}$:

Compare with Case 2 of \ref{typcase} (cf. Figure \ref{case12}), with the same definition of $\beta_1$ and the same partition of $I_1$ and $I_2$, the segmentation is depicted in Figure \ref{case23}.

\begin{figure}[h!]
  \centering
  \includegraphics[scale = 0.6]{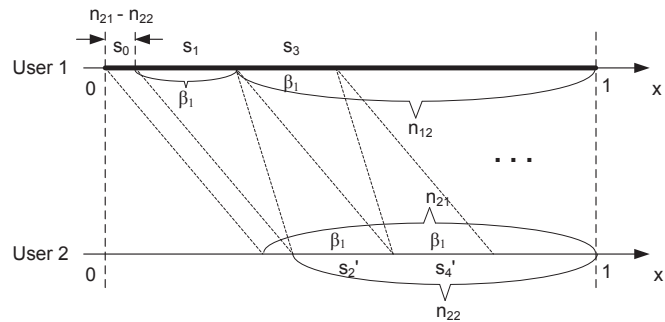}
  \caption{$n_{11} \ge n_{21}$, $n_{22} < n_{12}$, and $n_{22} < n_{21}$.}
  \label{case23}
\end{figure}

Noting the similarities between Figure \ref{case12} and Figure \ref{case23}, we see that the optimal solution of the two cases are the same: $f_1(x) = 1, \forall x\in I_1 \text{, and } f_2(x) = 0, \forall x\in I_2$.

\vspace{5pt}
\paragraph{$\delta_1 < 0$ and $\delta_2 \ge 0$}~
\vspace{5pt}

This is by definition \eqref{deltadef} equivalent to $n_{21} > 1$ and $n_{22} \ge n_{12}$. Note that by Lemma \ref{equivcc}, it is sufficient to only consider the case where $|\delta_1| \le |\delta_2|$, (because in case $|\delta_1| > |\delta_2|$, we have $|-\delta_2| \le |-\delta_1|$.)

Define $\beta_1 \triangleq 1 - n_{12} - (n_{21} - n_{22})$. As depicted in Figure \ref{case3}, interval $I_1 (= [0, 1])$ is partitioned into segments $\{s_0,s_1,s_3,s_5,\ldots\}$, with $|s_0| = n_{21} - n_{22}$ and $|s_1| = |s_3| = \ldots = \beta_1$; the last segment ending at 1 has the length of the proper residual. Interval $I_2 (= [1 - n_{22}, 1])$ is partitioned into segments $\{s_2',s_4'\ldots\}$, with $|s_2'| = |s_4'| = \ldots = \beta_1$; the last segment ending at 1 has the length of the proper residual.

\begin{figure}[h!]
  \centering
  \includegraphics[scale = 0.6]{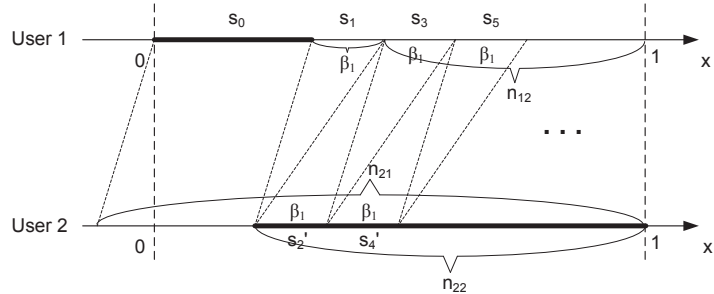}
  \caption{$n_{11} < n_{21}$ and $n_{22} \ge n_{12}$.}
  \label{case3}
\end{figure}

Clearly, $s_0$ of $I_1$ does not conflict with any levels of $I_2$, and thus we let $f_1(x) = 1, \forall x\in s_0$. On all the other segments, the sum-rate maximization problem is again \eqref{probc12}. As $\delta_1<0$, the optimal solution is given by
\begin{equation*}
f_1(x) = 0, \forall x\in s_1\cup s_3\cup\ldots \text{, and } f_2(x) = 1, \forall x\in s_2'\cup s_4'\cup\ldots.
\end{equation*}
Thus, a sum-capacity achieving scheme is
$f_1(x) = \left\{
\begin{array}{rl}
1, & \forall x\in s_0\\
0, & otherwise \\
\end{array}
\right. \text{, and } f_2(x) = 1, \forall x\in I_2.$

\vspace{5pt}
Summarizing the discussions of the six parameter settings (cf. Figures \ref{case11} - \ref{case21} and \ref{case222} - \ref{case3}) in this subsection, we observe:
\begin{RK} \label{numcodesasym}
Except for Case 1 of Section \ref{typcase}, the optimal schemes for the other cases all have the property that \emph{only one message is used for each user}.
\end{RK}

\vspace{10pt}
\subsubsection{The Case with a Limited Number of Messages}~
\vspace{5pt}

In this subsection, we extend the sum-capacity results in Section \ref{limsym} to the asymmetric channels when there are upper bounds on the number of messages $L_1,L_2$ for the two users respectively.
From Remark \ref{numcodesasym}, we only need to discuss Case 1 of Section \ref{typcase} (cf. Figure \ref{case11},) with its corresponding notations.

Similarly to the symmetric channels, we generalize Lemma \ref{noweak} and \ref{ifnostrong} to the following two lemmas for the general (asymmetric) channels, whose proofs are exact parallels to those of Lemma \ref{noweak} and \ref{ifnostrong}:

\begin{lem}~ \label{noweaka}

1. If $\exists s_{2i}$, $s_{2i}$ does not end at 1, such that $f_1(x) = 1,\forall x\in s_{2i}$, then $R^{sum} \le 1$.

2. If $\exists s_{2i}'$, $s_{2i}'$ does not end at 1, such that $f_2(x) = 1,\forall x\in s_{2i}'$, then $R^{sum} \le n_{22}$.
\end{lem}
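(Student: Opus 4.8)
The plan is to prove both parts by transcribing the argument of Lemma \ref{noweak}, adapted to the asymmetric segmentation of Figure \ref{case11} and to the two independent blocks $C_1, C_2$ of \eqref{ac1c2}. A single symmetric argument covers both parts, but they land on different right-hand sides precisely because the two users occupy intervals of different lengths, $|I_1| = n_{11} = 1$ versus $|I_2| = n_{22}$; this is the only substantive departure from the symmetric case.

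For Part~1, I would first locate the constrained variable: since $s_{2i}$ is an even-indexed segment of $I_1$, the restriction $f_1|_{s_{2i}}$ belongs to $C_2$, so the whole argument takes place inside the sub-problem \eqref{prob2a}. Following the split of $C_1$ into $C_{11}, C_{12}, C_{13}$ used in the proof of Lemma \ref{noweak}, I would partition $C_2$ into a ``before'' block, a ``middle'' block consisting of $s_{2i}$ together with the odd-indexed $I_2$-segments that conflict with it (namely $s_{2i-1}'$ and $s_{2i+1}'$), and an ``after'' block. By the complementarity conditions \eqref{cc10}, \eqref{cc20}, the hypothesis $f_1 = 1$ on $s_{2i}$ forces $f_2 = 0$ on $s_{2i-1}'$ and $s_{2i+1}'$; and by the achievability established for the optimal schemes \eqref{achievea1}, \eqref{achievea2} in the proof of Theorem \ref{siccapa}, the before and after blocks attain their maxima with the ``user~1 wins'' assignment. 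Combining these, the constrained maximum of \eqref{prob2a} is attained by $f_1 = 1$ on $s_2 \cup s_4 \cup \cdots$ and $f_2 = 0$ on $s_1' \cup s_3' \cup \cdots$; adjoining the unconstrained optimum \eqref{achievea1} of the independent block $C_1$ yields $f_1 \equiv 1$ on all of $I_1$ and $f_2 \equiv 0$ on all of $I_2$, so the constrained maximum of the full problem is $R^{sum} = |I_1| = 1$, whence every scheme meeting the hypothesis satisfies $R^{sum} \le 1$.

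Part~2 is the mirror image. Now $f_2|_{s_{2i}'}$ lies in $C_1$, so I would run the identical three-block decomposition inside \eqref{prob1a}, use complementarity to force $f_1 = 0$ on the odd-indexed $I_1$-segments neighboring $s_{2i}'$, and conclude that the constrained optimum of $C_1$ is the ``user~2 wins'' assignment. Adjoining the unconstrained optimum \eqref{achievea2} of $C_2$ then produces $f_2 \equiv 1$ on all of $I_2$ and $f_1 \equiv 0$ on all of $I_1$, so that $R^{sum} = |I_2| = n_{22}$, which is exactly the asymmetric bound claimed.

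The main obstacle I anticipate is bookkeeping rather than a new idea: one must check that, under the Case~1 shifts $\delta_1 = 1 - n_{21} \ge 0$ and $\delta_2 = n_{22} - n_{12} \ge 0$ and the alternating lengths $\beta_1 = 1 - n_{12}$, $\beta_2 = n_{22} - n_{21}$, each constrained even segment is coupled by \eqref{cc10}, \eqref{cc20} only to its two immediate conflicting neighbors in the opposite interval, so that the ``middle block'' is exactly the three segments named above and the forcing does not leak into the before/after blocks. This adjacency is guaranteed by the same construction that already makes the segmentation of Figure \ref{case11} split $C_1$ and $C_2$ without any cross-constraint. Once it is confirmed, the remainder is a verbatim copy of the proof of Lemma \ref{noweak}, and the degenerate situations (a truncated last segment, or $s_{2i}$/$s_{2i}'$ lying next to the boundary) are absorbed exactly as they are there.
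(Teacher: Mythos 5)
Your proposal is correct and takes essentially the same route as the paper: the paper's own ``proof'' of this lemma is precisely the assertion that the argument of Lemma \ref{noweak} carries over verbatim to the Case-1 asymmetric segmentation of Figure \ref{case11}, which is exactly the three-block decomposition inside the sub-problem \eqref{prob2a} (resp.\ \eqref{prob1a}) that you transcribe, with complementarity forcing the two conflicting neighbors $s_{2i\pm1}'$ (resp.\ $s_{2i\pm1}$) to zero. Your observation that the two parts land on $1$ versus $n_{22}$ only because $|I_1|=n_{11}=1$ while $|I_2|=n_{22}$ is also the paper's implicit reasoning, and your bookkeeping caveat (equal segment lengths $\beta_2$ within $C_2$ and $\beta_1$ within $C_1$, so the before/after sub-chains admit the ``winning-user'' assignment as a maximizer) is exactly what makes the parallel to the symmetric proof go through.
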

\begin{lem} \label{ifnostronga}~

1. If $\exists s_{2i-1}$, such that $f_1(x) = 0,\forall x\in s_{2i-1}$, then $R^{sum} \le n_{22}$.

2. If $\exists s_{2i-1}'$, such that $f_2(x) = 0,\forall x\in s_{2i-1}'$, then $R^{sum} \le 1$.
\end{lem}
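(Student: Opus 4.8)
The plan is to run the proof of Lemma~\ref{ifnostrong} essentially verbatim, transported to the asymmetric segmentation of Case~1 of Section~\ref{typcase} (Figure~\ref{case11}). By Remark~\ref{numcodesasym} this is the only subcase in which more than one message is ever used, so it is the only subcase in which the hypotheses are non-vacuous. Recall the decomposition~\eqref{ac1c2} of the variables into $C_1=\{f_1|_{s_1,s_3,\ldots},\,f_2|_{s_2',s_4',\ldots}\}$ and $C_2=\{f_1|_{s_2,s_4,\ldots},\,f_2|_{s_1',s_3',\ldots}\}$, which carry no complementarity constraint between them, so that $R^{sum}$ separates into the sum of the optimum over $C_1$ and the optimum over $C_2$. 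The two parts of the lemma will collapse the sum-rate to the two single-user baselines, namely user~2 transmitting alone with $R^{sum}=|I_2|=n_{22}$, and user~1 transmitting alone with $R^{sum}=|I_1|=n_{11}=1$.

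For part~1, note that the forced-off segment $s_{2i-1}$ is an odd unprimed segment, hence a member of $C_1$, and it is exactly one of the segments on which user~1 is active in the optimal scheme~\eqref{achievea}. I would split $C_1$ into three parts $C_{11}$ (segments before $s_{2i-1}$), $C_{12}=f_1|_{s_{2i-1}}$, and $C_{13}$ (segments after $s_{2i-1}$), as in Figure~\ref{lemL2}. Because $f_1=0$ on $s_{2i-1}$ severs the couplings through that segment, $C_{11}$ and $C_{13}$ decouple, and the achievability analysis inside the proof of Theorem~\ref{siccapa} shows their joint optimum is attained by the all-user-2 assignment ($f_2=1$ on the primed segments, $f_1=0$ on the unprimed ones), while $C_{12}$ contributes nothing. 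Hence the constrained optimum over $C_1$ is attained by the all-user-2 assignment on $C_1$. Combined with the (unconstrained) optimum over $C_2$, which by the same analysis is again the all-user-2 assignment $f_2=1$ on $s_1',s_3',\ldots$, concatenation yields $f_2=1$ on $I_2$ and $f_1=0$ on $I_1$, whose sum-rate is $|I_2|=n_{22}$.

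Part~2 is the mirror image. Now $s_{2i-1}'$ is an odd primed segment, a member of $C_2$, on which user~2 is the active user in~\eqref{achievea}. Forcing $f_2=0$ there breaks the chain inside $C_2$, and the identical three-part argument forces the constrained optimum over $C_2$ to be the all-user-1 assignment $f_1=1$ on $s_2,s_4,\ldots$; together with the all-user-1 optimum $f_1=1$ on $s_1,s_3,\ldots$ over $C_1$, concatenation gives $f_1=1$ on $I_1$ and $f_2=0$ on $I_2$, with sum-rate $|I_1|=n_{11}=1$.

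The one step that is more than bookkeeping is confirming that, after the deletion splits the relevant chain, the all-other-user assignment really does attain the optimum of the post-deletion tail, i.e.\ that no mixed assignment on that tail beats it. In the symmetric proof this followed from the length ordering of the segments inside a chain; here one checks the same ordering separately on the uniform-$\beta_1$ chain that underlies $C_1$ (for part~1) and the uniform-$\beta_2$ chain that underlies $C_2$ (for part~2), handling the residual last segment and the possibly-empty $C_{11},C_{13}$ exactly as in Theorem~\ref{siccapa}. I expect this reduction-to-the-symmetric-chain verification, rather than the decomposition itself, to be the only real obstacle.
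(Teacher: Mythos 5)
Your proposal is correct and matches the paper's own treatment: the paper proves Lemma~\ref{ifnostronga} simply by asserting that its proof is an ``exact parallel'' of the proof of Lemma~\ref{ifnostrong}, and your argument is precisely that transport --- the three-part split of the affected chain ($C_{11}$, $C_{12}$, $C_{13}$), the all-other-user optimality of the post-deletion sub-chains via the pairing bound from Theorem~\ref{siccapa}, and the recombination with the unconstrained optimum of the other chain to land on the single-user rates $n_{22}$ and $n_{11}=1$. Your closing observation about re-verifying the pairing/length argument on the uniform-$\beta_1$ and uniform-$\beta_2$ chains is exactly the content the paper leaves implicit in the word ``parallel.''
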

We then have the following generalization of Theorem \ref{lessL} to the general (asymmetric) channels:

\vspace{5pt}
\begin{thm}\label{limasym}
Denote by $L_i$ the number of messages used by the $i^{th}$ user in any scheme, and denote by $n_i$ the dictated number of messages used by the $i^{th}$ user in the constrained sum-capacity achieving scheme \eqref{achievea}. Then, if $L_1\le n_1-1$ \emph{or} $L_2\le n_2-1$, we have $R^{sum} \le 1$.
\end{thm}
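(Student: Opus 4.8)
The plan is to follow the proof of Theorem~\ref{lessL} almost verbatim, with the asymmetric Lemmas~\ref{noweaka} and~\ref{ifnostronga} playing the roles of Lemmas~\ref{noweak} and~\ref{ifnostrong}, after first reducing to a single configuration. First I would invoke Remark~\ref{numcodesasym}: in every case other than Case~1 of Section~\ref{typcase} (Figure~\ref{case11}) the optimal scheme uses exactly one message per user, so $n_1=n_2=1$ there, and the hypothesis $L_1\le n_1-1$ or $L_2\le n_2-1$ becomes $L_1=0$ or $L_2=0$; a silent user leaves $R^{sum}$ equal to the other user's rate, which is bounded by the length of its interval, i.e. by $n_{11}=1$ or $n_{22}\le 1$. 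Hence all these cases give $R^{sum}\le 1$ immediately, and it remains only to treat Case~1 of Section~\ref{typcase}, using the segmentation of Figure~\ref{case11}. There the optimal scheme \eqref{achievea} is active precisely on the odd segments $s_1,s_3,\ldots$ of $I_1$ (mutually separated by inactive even segments, hence $n_1$ distinct messages) and on the odd segments $s_1',s_3',\ldots$ of $I_2$ (likewise $n_2$ distinct messages).

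Assuming WLOG that the binding constraint is $L_1\le n_1-1$, I would first note that $R^{sum}=1$ is achievable by letting the first user transmit alone, so only the upper bound $R^{sum}\le 1$ needs proof. I would then split on the behavior of $f_1$ on the even segments of $I_1$, exactly as in the proof of Theorem~\ref{lessL}. If some even segment $s_{2i}$ not ending at $1$ has $f_1\equiv 1$, Lemma~\ref{noweaka}.1 gives $R^{sum}\le 1$ at once. Otherwise every internal even segment $s_{2i}$ carries an interior point where $f_1=0$; by Remark~\ref{numcodes} this forces any two consecutive, partly active odd segments to be distinct messages, so if all $n_1$ odd segments were partly active the first user would require $\ge n_1$ messages, contradicting $L_1\le n_1-1$. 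Thus some odd segment $s_{2i-1}$ is fully inactive, and Lemma~\ref{ifnostronga}.1 yields $R^{sum}\le n_{22}\le 1$, where I use $n_{22}\le n_{11}=1$. The case $L_2\le n_2-1$ is handled symmetrically on $I_2$, with Lemma~\ref{noweaka}.2 and Lemma~\ref{ifnostronga}.2 supplying the bounds $R^{sum}\le n_{22}$ and $R^{sum}\le 1$, both $\le 1$.

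The main obstacle I anticipate is the message-counting step. I need to argue rigorously that once no internal even segment of $I_1$ is fully $f_1$-active, each such segment genuinely interrupts the first user's active set in the sense of Remark~\ref{numcodes}, so that $k$ partly active odd segments really do cost $k$ distinct messages; the subtlety is that an even segment being merely \emph{not identically} $1$ must be translated into an interior zero that separates the neighboring odd segments, and the final residual segment of $I_1$ (and any degenerate boundary segments) must be counted consistently. A secondary, purely bookkeeping point is to track which of the two asymmetric lemma conclusions ($R^{sum}\le 1$ versus $R^{sum}\le n_{22}$) arises in each branch and to observe that the normalization $n_{11}=1\ge n_{22}$ collapses both to the desired bound.
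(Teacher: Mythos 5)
Your proposal is correct and follows essentially the same route as the paper's own proof: the reduction via Remark~\ref{numcodesasym} to Case~1 of Section~\ref{typcase}, the achievability of sum-rate $1$ by letting one user transmit alone, and the dichotomy (some even segment fully active, handled by Lemma~\ref{noweaka}; otherwise interior zeros plus the message count force a fully inactive odd segment, handled by Lemma~\ref{ifnostronga}) are exactly the paper's steps i)--iii), the only difference being that you spell out the case $L_1\le n_1-1$ while the paper spells out $L_2\le n_2-1$ and invokes symmetry. The interior-zero subtlety you flag is glossed at the same level of informality in the paper's argument, so your proposal is not weaker than the original on that point.
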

\begin{proof}
Consider $L_2\le n_2-1$. (The case of $L_1\le n_1-1$ can be proved similarly.)

i) The sum-rate of 1 is always achievable with
\begin{equation*}
f_1(x) = 1, \forall x\in I_1, f_2(x) = 0, \forall x\in I_2.
\end{equation*}

ii) If there exists $s_{2i}', (i\ge 1)$ and $s_{2i}'$ does not end at 1, such that $f_2(x) = 1,\forall x\in s_{2i}'$, then from Lemma \ref{noweaka}, $R^{sum}\le n_{22}\le 1$.

iii) If \emph{for every} $s_{2i}', i\ge 1$ and $s_{2i}'$ does not end at 1, there exists $x_{i}$ in the interior of $s_{2i}'$ such that $f_2(x_{i})= 0$:

For every $x_{i}$, since $s_{2i}'$ does not end at 1, $s_{2i+1}'$ exists. Note that $x_{i}$ \emph{separates} the two segments $s_{2i-1}', s_{2i+1}'$ for the $2^{nd}$ user. From Remark \ref{numcodes}, $s_{2i-1}'$ and $s_{2i+1}'$ have to be \emph{two distinct messages} provided that both of them are (at least partly) active for the $2^{nd}$ user. On the other hand, there are $n_2$ such segments $\{s_1',s_3',\ldots,s_{2n_2-1}'\}$, whereas the number of messages is upper bounded by $L_2\le n_2-1$. Consequently, $\exists 1\le i_2\le n_2$, such that $f_2(x) = 0, \forall x\in s_{2i_2-1}$. In other words, for the $2^{nd}$ user, there must be a segment with an odd index that is \emph{fully inactive}. By Lemma \ref{ifnostronga}, in this case, $R^{sum}\le 1$.
\end{proof}

Similarly to the symmetric case, we conclude that if the number of messages used for \emph{either} user is fewer than the number used in the optimal scheme \eqref{achievea}, the maximum achievable sum-rate drops to $1$.

\subsection{Proof of Lemma \ref{findplem}} \label{prooflastlem}
At the $1^{st}$ receiver, the message $x_1^{(1)}$ is decoded by treating all other messages ($x_1^{(2)},\ldots,x_1^{(L)},x_2^{(1)},\ldots,x_2^{(L)}$) as noise, and has an $\ms{SNR_1}$ of $\frac{p^{(1)}}{(\bar{p} - p^{(1)}) + g_{21}\bar{p} + 1}$.

At the $2^{nd}$ receiver, $x_2^{(1)}$ is first decoded and peeled off. Suppose $x_1^{(1)}$ is also decoded at the $2^{nd}$ receiver (by treating $x_1^{(2)},\ldots,x_1^{(L)},$ $x_2^{(2)},\ldots,x_2^{(L)}$ as noise,) it has an $\ms{SNR_2}$ of $\frac{g_{12}p^{(1)}}{g_{12}(\bar{p} - p^{(1)}) + (\bar{p} - p^{(1)}) + 1}$. To equalize the rate constraints for $x_1^{(1)}$ at both receivers, we need
\begin{equation*}
\ms{SNR_1}=\ms{SNR_2} \Rightarrow p^{(1)} = 1 - g_{12} +(1-g_{12}^2)\bar{p}.
\end{equation*}
Note that $p^{(1)}<\bar{p}$ requires that $\bar{p}>\frac{1-g_{12}}{g_{12}^2}$.
Otherwise, $\bar{p}\le\frac{1-g_{12}}{g_{12}^2}$, and the above $1 - g_{12} +(1-g_{12}^2)\bar{p}\ge \bar{p}$. It implies that we should not decode $x_1^{(1)}$ at the $2^{nd}$ receiver, i.e., $x_i^{(1)} (i=1,2)$ is the only message ($L=1$) of the $i^{th}$ user, which is treated as noise at the other receiver.





%

\bibliographystyle{plain}
{\bibliography{bare_conf}}




\end{document}